\newcommand{\safemath}[2]{\newcommand{#1}{\ensuremath{#2}\xspace}}
\newcommand{\ssa}{\mathsf{a}}
\newcommand{\ssb}{\mathsf{b}}
\newcommand{\ssc}{\mathsf{c}}
\newcommand{\ssd}{\mathsf{d}}
\newcommand{\sse}{\mathsf{e}}
\newcommand{\ssf}{\mathsf{f}}
\newcommand{\ssg}{\mathsf{g}}
\newcommand{\ssh}{\mathsf{h}}
\newcommand{\ssi}{\mathsf{i}}
\newcommand{\ssj}{\mathsf{j}}
\newcommand{\ssk}{\mathsf{k}}
\newcommand{\ssl}{\mathsf{l}}
\newcommand{\ssm}{\mathsf{m}}
\newcommand{\ssn}{\mathsf{n}}
\newcommand{\sso}{\mathsf{o}}
\newcommand{\ssp}{\mathsf{p}}
\newcommand{\ssq}{\mathsf{q}}
\newcommand{\ssr}{\mathsf{r}}
\newcommand{\sss}{\mathsf{s}}
\newcommand{\sst}{\mathsf{t}}
\newcommand{\ssu}{\mathsf{u}}
\newcommand{\ssv}{\mathsf{v}}
\newcommand{\ssw}{\mathsf{w}}
\newcommand{\ssx}{\mathsf{x}}
\newcommand{\ssy}{\mathsf{y}}
\newcommand{\ssz}{\mathsf{z}}
\safemath{\bmsa}{\bm{\ssa}}
\safemath{\bmsb}{\bm{\ssb}}
\safemath{\bmsc}{\bm{\ssc}}
\safemath{\bmsd}{\bm{\ssd}}
\safemath{\bmse}{\bm{\sse}}
\safemath{\bmsf}{\bm{\ssf}}
\safemath{\bmsg}{\bm{\ssg}}
\safemath{\bmsh}{\bm{\ssh}}
\safemath{\bmsi}{\bm{\ssi}}
\safemath{\bmsj}{\bm{\ssj}}
\safemath{\bmsk}{\bm{\ssk}}
\safemath{\bmsl}{\bm{\ssl}}
\safemath{\bmsm}{\bm{\ssm}}
\safemath{\bmsn}{\bm{\ssn}}
\safemath{\bmso}{\bm{\sso}}
\safemath{\bmsp}{\bm{\ssp}}
\safemath{\bmsq}{\bm{\ssq}}
\safemath{\bmsr}{\bm{\ssr}}
\safemath{\bmss}{\bm{\sss}}
\safemath{\bmst}{\bm{\sst}}
\safemath{\bmsu}{\bm{\ssu}}
\safemath{\bmsv}{\bm{\ssv}}
\safemath{\bmsw}{\bm{\ssw}}
\safemath{\bmsx}{\bm{\ssx}}
\safemath{\bmsy}{\bm{\ssy}}
\safemath{\bmsz}{\bm{\ssz}}
\bmdefine{\bmualphad}{\upalpha}
\bmdefine{\bmubetad}{\upbeta}
\bmdefine{\bmuchid}{\upchi}
\bmdefine{\bmudeltad}{\updelta}
\bmdefine{\bmuepsilond}{\upepsilon}
\bmdefine{\bmuvarepsilond}{\upvarepsilon}
\bmdefine{\bmuetad}{\upeta}
\bmdefine{\bmugammad}{\upgamma}
\bmdefine{\bmuiotad}{\upiota}
\bmdefine{\bmukappad}{\upkappa}
\bmdefine{\bmulambdad}{\uplambda}
\bmdefine{\bmumud}{\upmu}
\bmdefine{\bmunud}{\upnu}
\bmdefine{\bmuomegad}{\upomega}
\bmdefine{\bmuphid}{\upphi}
\bmdefine{\bmuvarphid}{\upvarphi}
\bmdefine{\bmupid}{\uppi}
\bmdefine{\bmuvarpid}{\upvarpi}
\bmdefine{\bmupsid}{\uppsi}
\bmdefine{\bmurhod}{\uprho}
\bmdefine{\bmuvarrhod}{\upvarrho}
\bmdefine{\bmusigmad}{\upsigma}
\bmdefine{\bmuvarsigmad}{\upvarsigma}
\bmdefine{\bmutaud}{\uptau}
\bmdefine{\bmuthetad}{\uptheta}
\bmdefine{\bmuvarthetad}{\upvartheta}
\bmdefine{\bmuupsilond}{\upupsilon}
\bmdefine{\bmuxid}{\upxi}
\bmdefine{\bmuzetad}{\upzeta}
\safemath{\bmua}{\mathbf{a}}
\safemath{\bmub}{\mathbf{b}}
\safemath{\bmuc}{\mathbf{c}}
\safemath{\bmud}{\mathbf{d}}
\safemath{\bmue}{\mathbf{e}}
\safemath{\bmuf}{\mathbf{f}}
\safemath{\bmug}{\mathbf{g}}
\safemath{\bmuh}{\mathbf{h}}
\safemath{\bmui}{\mathbf{i}}
\safemath{\bmuj}{\mathbf{j}}
\safemath{\bmuk}{\mathbf{k}}
\safemath{\bmul}{\mathbf{l}}
\safemath{\bmum}{\mathbf{m}}
\safemath{\bmun}{\mathbf{n}}
\safemath{\bmuo}{\mathbf{o}}
\safemath{\bmup}{\mathbf{p}}
\safemath{\bmuq}{\mathbf{q}}
\safemath{\bmur}{\mathbf{r}}
\safemath{\bmus}{\mathbf{s}}
\safemath{\bmut}{\mathbf{t}}
\safemath{\bmuu}{\mathbf{u}}
\safemath{\bmuv}{\mathbf{v}}
\safemath{\bmuw}{\mathbf{w}}
\safemath{\bmux}{\mathbf{x}}
\safemath{\bmuy}{\mathbf{y}}
\safemath{\bmuz}{\mathbf{z}}
\safemath{\bmualpha}{\bmualphad}
\safemath{\bmubeta}{\bmubetad}
\safemath{\bmuchi}{\bumchid}
\safemath{\bmudelta}{\bmudeltad}
\safemath{\bmuepsilon}{\bmuepsilond}
\safemath{\bmuvarepsilon}{\bmuvarepsilond}
\safemath{\bmueta}{\bmuetad}
\safemath{\bmugamma}{\bmugammad}
\safemath{\bmuiota}{\bmuiotad}
\safemath{\bmukappa}{\bmukappad}
\safemath{\bmulambda}{\bmulambdad}
\safemath{\bmumu}{\bmumud}
\safemath{\bmunu}{\bmunud}
\safemath{\bmuomega}{\bmuomegad}
\safemath{\bmuphi}{\bmuphid}
\safemath{\bmuvarphi}{\bmuvarphid}
\safemath{\bmupi}{\bmupid}
\safemath{\bmuvarpi}{\bmuvarpid}
\safemath{\bmupsi}{\bmupsid}
\safemath{\bmurho}{\bmurhod}
\safemath{\bmuvarrho}{\bmuvarrhod}
\safemath{\bmusigma}{\bmusigmad}
\safemath{\bmuvarsigma}{\bmuvarsigmad}
\safemath{\bmutau}{\bmutaud}
\safemath{\bmutheta}{\bmuthetad}
\safemath{\bmuvartheta}{\bmuvarthetad}
\safemath{\bmuupsilon}{\bmuupsilond}
\safemath{\bmuxi}{\bmuxid}
\safemath{\bmuzeta}{\bmuzetad}
\bmdefine{\bmiad}{a}
\bmdefine{\bmibd}{b}
\bmdefine{\bmicd}{c}
\bmdefine{\bmidd}{d}
\bmdefine{\bmied}{e}
\bmdefine{\bmifd}{f}
\bmdefine{\bmigd}{g}
\bmdefine{\bmihd}{h}
\bmdefine{\bmiid}{i}
\bmdefine{\bmijd}{j}
\bmdefine{\bmikd}{k}
\bmdefine{\bmild}{l}
\bmdefine{\bmimd}{m}
\bmdefine{\bmind}{n}
\bmdefine{\bmiod}{o}
\bmdefine{\bmipd}{p}
\bmdefine{\bmiqd}{q}
\bmdefine{\bmird}{r}
\bmdefine{\bmisd}{s}
\bmdefine{\bmitd}{t}
\bmdefine{\bmiud}{u}
\bmdefine{\bmivd}{v}
\bmdefine{\bmiwd}{w}
\bmdefine{\bmixd}{x}
\bmdefine{\bmiyd}{y}
\bmdefine{\bmizd}{z}
\bmdefine{\bmialphad}{\alpha}
\bmdefine{\bmibetad}{\beta}
\bmdefine{\bmichid}{\chi}
\bmdefine{\bmideltad}{\delta}
\bmdefine{\bmiepsilond}{\epsilon}
\bmdefine{\bmivarepsilond}{\varepsilon}
\bmdefine{\bmietad}{\eta}
\bmdefine{\bmigammad}{\gamma}
\bmdefine{\bmiiotad}{\iota}
\bmdefine{\bmikappad}{\kappa}
\bmdefine{\bmivarkappad}{\varkappa}
\bmdefine{\bmilambdad}{\lambda}
\bmdefine{\bmimud}{\mu}
\bmdefine{\bminud}{\nu}
\bmdefine{\bmiomegad}{\omega}
\bmdefine{\bmiphid}{\phi}
\bmdefine{\bmivarphid}{\varphi}
\bmdefine{\bmipid}{\pi}
\bmdefine{\bmivarpid}{\varpi}
\bmdefine{\bmipsid}{\psi}
\bmdefine{\bmirhod}{\rho}
\bmdefine{\bmivarrhod}{\varrho}
\bmdefine{\bmisigmad}{\sigma}
\bmdefine{\bmivarsigmad}{\varsigma}
\bmdefine{\bmitaud}{\tau}
\bmdefine{\bmithetad}{\theta}
\bmdefine{\bmivarthetad}{\vartheta}
\bmdefine{\bmiupsilond}{\upsilon}
\bmdefine{\bmixid}{\xi}
\bmdefine{\bmizetad}{\zeta}
\safemath{\bmia}{\bmiad}
\safemath{\bmib}{\bmibd}
\safemath{\bmic}{\bmicd}
\safemath{\bmid}{\bmidd}
\safemath{\bmie}{\bmied}
\safemath{\bmif}{\bmifd}
\safemath{\bmig}{\bmigd}
\safemath{\bmih}{\bmihd}
\safemath{\bmii}{\bmiid}
\safemath{\bmij}{\bmijd}
\safemath{\bmik}{\bmikd}
\safemath{\bmil}{\bmild}
\safemath{\bmim}{\bmimd}
\safemath{\bmin}{\bmind}
\safemath{\bmio}{\bmiod}
\safemath{\bmip}{\bmipd}
\safemath{\bmiq}{\bmiqd}
\safemath{\bmir}{\bmird}
\safemath{\bmis}{\bmisd}
\safemath{\bmit}{\bmitd}
\safemath{\bmiu}{\bmiud}
\safemath{\bmiv}{\bmivd}
\safemath{\bmiw}{\bmiwd}
\safemath{\bmix}{\bmixd}
\safemath{\bmiy}{\bmiyd}
\safemath{\bmiz}{\bmizd}
\safemath{\bmialpha}{\bmialphad}
\safemath{\bmibeta}{\bmibetad}
\safemath{\bmichi}{\bmichid}
\safemath{\bmidelta}{\bmideltad}
\safemath{\bmiepsilon}{\bmiepsilond}
\safemath{\bmivarepsilon}{\bmivarepsilond}
\safemath{\bmieta}{\bmietad}
\safemath{\bmigamma}{\bmigammad}
\safemath{\bmiiota}{\bmiiotad}
\safemath{\bmikappa}{\bmikappad}
\safemath{\bmivarkappa}{\bmivarkappad}
\safemath{\bmilambda}{\bmilambdad}
\safemath{\bmimu}{\bmimud}
\safemath{\bminu}{\bminud}
\safemath{\bmiomega}{\bmiomegad}
\safemath{\bmiphi}{\bmiphid}
\safemath{\bmivarphi}{\bmivarphid}
\safemath{\bmipi}{\bmipid}
\safemath{\bmivarpi}{\bmivarpid}
\safemath{\bmipsi}{\bmipsid}
\safemath{\bmirho}{\bmirhod}
\safemath{\bmivarrho}{\bmivarrhod}
\safemath{\bmisigma}{\bmisigmad}
\safemath{\bmivarsigma}{\bmivarsigmad}
\safemath{\bmitau}{\bmitaud}
\safemath{\bmitheta}{\bmithetad}
\safemath{\bmivartheta}{\bmivarthetad}
\safemath{\bmiupsilon}{\bmiupsilond}
\safemath{\bmixi}{\bmixid}
\safemath{\bmizeta}{\bmizetad}
\bmdefine{\bmuDeltad}{\Updelta}
\bmdefine{\bmuGammad}{\Upgamma}
\bmdefine{\bmuLambdad}{\Uplambda}
\bmdefine{\bmuOmegad}{\Upomega}
\bmdefine{\bmuPhid}{\Upphi}
\bmdefine{\bmuPid}{\Uppi}
\bmdefine{\bmuPsid}{\Uppsi}
\bmdefine{\bmuSigmad}{\Upsigma}
\bmdefine{\bmuThetad}{\Uptheta}
\bmdefine{\bmuUpsilond}{\Upupsilon}
\bmdefine{\bmuXid}{\Upxi}
\safemath{\bmuA}{\mathbf{A}}
\safemath{\bmuB}{\mathbf{B}}
\safemath{\bmuC}{\mathbf{C}}
\safemath{\bmuD}{\mathbf{D}}
\safemath{\bmuE}{\mathbf{E}}
\safemath{\bmuF}{\mathbf{F}}
\safemath{\bmuG}{\mathbf{G}}
\safemath{\bmuH}{\mathbf{H}}
\safemath{\bmuI}{\mathbf{I}}
\safemath{\bmuJ}{\mathbf{J}}
\safemath{\bmuK}{\mathbf{K}}
\safemath{\bmuL}{\mathbf{L}}
\safemath{\bmuM}{\mathbf{M}}
\safemath{\bmuN}{\mathbf{N}}
\safemath{\bmuO}{\mathbf{O}}
\safemath{\bmuP}{\mathbf{P}}
\safemath{\bmuQ}{\mathbf{Q}}
\safemath{\bmuR}{\mathbf{R}}
\safemath{\bmuS}{\mathbf{S}}
\safemath{\bmuT}{\mathbf{T}}
\safemath{\bmuU}{\mathbf{U}}
\safemath{\bmuV}{\mathbf{V}}
\safemath{\bmuW}{\mathbf{W}}
\safemath{\bmuX}{\mathbf{X}}
\safemath{\bmuY}{\mathbf{Y}}
\safemath{\bmuZ}{\mathbf{Z}}
\safemath{\bmuZero}{\mathbf{0}}
\safemath{\bmuOne}{\mathbf{1}}
\safemath{\bmuDelta}{\bmuDeltad}
\safemath{\bmuGamma}{\bmuGammad}
\safemath{\bmuLambda}{\bmuLambdad}
\safemath{\bmuOmega}{\bmuOmegad}
\safemath{\bmuPhi}{\bmuPhid}
\safemath{\bmuPi}{\bmuPid}
\safemath{\bmuPsi}{\bmuPsid}
\safemath{\bmuSigma}{\bmuSigmad}
\safemath{\bmuTheta}{\bmuThetad}
\safemath{\bmuUpsilon}{\bmuUpsilond}
\safemath{\bmuXi}{\bmuXid}
\bmdefine{\bmiAd}{A}
\bmdefine{\bmiBd}{B}
\bmdefine{\bmiCd}{C}
\bmdefine{\bmiDd}{D}
\bmdefine{\bmiEd}{E}
\bmdefine{\bmiFd}{F}
\bmdefine{\bmiGd}{G}
\bmdefine{\bmiHd}{H}
\bmdefine{\bmiId}{I}
\bmdefine{\bmiJd}{J}
\bmdefine{\bmiKd}{K}
\bmdefine{\bmiLd}{L}
\bmdefine{\bmiMd}{M}
\bmdefine{\bmiOd}{N}
\bmdefine{\bmiPd}{O}
\bmdefine{\bmiQd}{P}
\bmdefine{\bmiRd}{R}
\bmdefine{\bmiSd}{S}
\bmdefine{\bmiTd}{T}
\bmdefine{\bmiUd}{U}
\bmdefine{\bmiVd}{V}
\bmdefine{\bmiWd}{W}
\bmdefine{\bmiXd}{X}
\bmdefine{\bmiYd}{Y}
\bmdefine{\bmiZd}{Z}
\bmdefine{\bmiDeltad}{\Delta}
\bmdefine{\bmiGammad}{\Gamma}
\bmdefine{\bmiLambdad}{\Lambda}
\bmdefine{\bmiOmegad}{\Omega}
\bmdefine{\bmiPhid}{\Phi}
\bmdefine{\bmiPid}{\Pi}
\bmdefine{\bmiPsid}{\Psi}
\bmdefine{\bmiSigmad}{\Sigma}
\bmdefine{\bmiThetad}{\Theta}
\bmdefine{\bmiUpsilond}{\Upsilon}
\bmdefine{\bmiXid}{\Xi}
\safemath{\bmiA}{\bmiAd}
\safemath{\bmiB}{\bmiBd}
\safemath{\bmiC}{\bmiCd}
\safemath{\bmiD}{\bmiDd}
\safemath{\bmiE}{\bmiEd}
\safemath{\bmiF}{\bmiFd}
\safemath{\bmiG}{\bmiGd}
\safemath{\bmiH}{\bmiHd}
\safemath{\bmiI}{\bmiId}
\safemath{\bmiJ}{\bmiJd}
\safemath{\bmiK}{\bmiKd}
\safemath{\bmiL}{\bmiLd}
\safemath{\bmiM}{\bmiMd}
\safemath{\bmiN}{\bmiNd}
\safemath{\bmiO}{\bmiOd}
\safemath{\bmiP}{\bmiPd}
\safemath{\bmiQ}{\bmiQd}
\safemath{\bmiR}{\bmiRd}
\safemath{\bmiS}{\bmiSd}
\safemath{\bmiT}{\bmiTd}
\safemath{\bmiU}{\bmiUd}
\safemath{\bmiV}{\bmiVd}
\safemath{\bmiW}{\bmiWd}
\safemath{\bmiX}{\bmiXd}
\safemath{\bmiY}{\bmiYd}
\safemath{\bmiZ}{\bmiZd}
\safemath{\bmiDelta}{\bmiDeltad}
\safemath{\bmiGamma}{\bmiGammad}
\safemath{\bmiLambda}{\bmiLambdad}
\safemath{\bmiOmega}{\bmiOmegad}
\safemath{\bmiPhi}{\bmiPhid}
\safemath{\bmiPi}{\bmiPid}
\safemath{\bmiPsi}{\bmiPsid}
\safemath{\bmiSigma}{\bmiSigmad}
\safemath{\bmiTheta}{\bmiThetad}
\safemath{\bmiUpsilon}{\bmiUpsilond}
\safemath{\bmiXi}{\bmiXid}
\safemath{\evA}{\mathcal{A}}
\safemath{\evB}{\mathcal{B}}
\safemath{\evC}{\mathcal{C}}
\safemath{\evD}{\mathcal{D}}
\safemath{\evE}{\mathcal{E}}
\safemath{\evF}{\mathcal{F}}
\safemath{\evG}{\mathcal{G}}
\safemath{\evH}{\mathcal{H}}
\safemath{\evI}{\mathcal{I}}
\safemath{\evJ}{\mathcal{J}}
\safemath{\evK}{\mathcal{K}}
\safemath{\evL}{\mathcal{L}}
\safemath{\evM}{\mathcal{M}}
\safemath{\evN}{\mathcal{N}}
\safemath{\evO}{\mathcal{O}}
\safemath{\evP}{\mathcal{P}}
\safemath{\evQ}{\mathcal{Q}}
\safemath{\evR}{\mathcal{R}}
\safemath{\evS}{\mathcal{S}}
\safemath{\evT}{\mathcal{T}}
\safemath{\evU}{\mathcal{U}}
\safemath{\evV}{\mathcal{V}}
\safemath{\evW}{\mathcal{W}}
\safemath{\evX}{\mathcal{X}}
\safemath{\evY}{\mathcal{Y}}
\safemath{\evZ}{\mathcal{Z}}
\safemath{\setA}{\mathcal{A}}
\safemath{\setB}{\mathcal{B}}
\safemath{\setC}{\mathcal{C}}
\safemath{\setD}{\mathcal{D}}
\safemath{\setE}{\mathcal{E}}
\safemath{\setF}{\mathcal{F}}
\safemath{\setG}{\mathcal{G}}
\safemath{\setH}{\mathcal{H}}
\safemath{\setI}{\mathcal{I}}
\safemath{\setJ}{\mathcal{J}}
\safemath{\setK}{\mathcal{K}}
\safemath{\setL}{\mathcal{L}}
\safemath{\setM}{\mathcal{M}}
\safemath{\setN}{\mathcal{N}}
\safemath{\setO}{\mathcal{O}}
\safemath{\setP}{\mathcal{P}}
\safemath{\setQ}{\mathcal{Q}}
\safemath{\setR}{\mathcal{R}}
\safemath{\setS}{\mathcal{S}}
\safemath{\setT}{\mathcal{T}}
\safemath{\setU}{\mathcal{U}}
\safemath{\setV}{\mathcal{V}}
\safemath{\setW}{\mathcal{W}}
\safemath{\setX}{\mathcal{X}}
\safemath{\setY}{\mathcal{Y}}
\safemath{\setZ}{\mathcal{Z}}
\safemath{\emptySet}{\varnothing}
\safemath{\colA}{\mathscr{A}}
\safemath{\colB}{\mathscr{B}}
\safemath{\colC}{\mathscr{C}}
\safemath{\colD}{\mathscr{D}}
\safemath{\colE}{\mathscr{E}}
\safemath{\colF}{\mathscr{F}}
\safemath{\colG}{\mathscr{G}}
\safemath{\colH}{\mathscr{H}}
\safemath{\colI}{\mathscr{I}}
\safemath{\colJ}{\mathscr{J}}
\safemath{\colK}{\mathscr{K}}
\safemath{\colL}{\mathscr{L}}
\safemath{\colM}{\mathscr{M}}
\safemath{\colN}{\mathscr{N}}
\safemath{\colO}{\mathscr{O}}
\safemath{\colP}{\mathscr{P}}
\safemath{\colQ}{\mathscr{Q}}
\safemath{\colR}{\mathscr{R}}
\safemath{\colS}{\mathscr{S}}
\safemath{\colT}{\mathscr{T}}
\safemath{\colU}{\mathscr{U}}
\safemath{\colV}{\mathscr{V}}
\safemath{\colW}{\mathscr{W}}
\safemath{\colX}{\mathscr{X}}
\safemath{\colY}{\mathscr{Y}}
\safemath{\colZ}{\mathscr{Z}}
\safemath{\opA}{\mathbb{A}}
\safemath{\opB}{\mathbb{B}}
\safemath{\opC}{\mathbb{C}}
\safemath{\opD}{\mathbb{D}}
\safemath{\opE}{\mathbb{E}}
\safemath{\opF}{\mathbb{F}}
\safemath{\opG}{\mathbb{G}}
\safemath{\opH}{\mathbb{H}}
\safemath{\opI}{\mathbb{I}}
\safemath{\opJ}{\mathbb{J}}
\safemath{\opK}{\mathbb{K}}
\safemath{\opL}{\mathbb{L}}
\safemath{\opM}{\mathbb{M}}
\safemath{\opN}{\mathbb{N}}
\safemath{\opO}{\mathbb{O}}
\safemath{\opP}{\mathbb{P}}
\safemath{\opQ}{\mathbb{Q}}
\safemath{\opR}{\mathbb{R}}
\safemath{\opS}{\mathbb{S}}
\safemath{\opT}{\mathbb{T}}
\safemath{\opU}{\mathbb{U}}
\safemath{\opV}{\mathbb{V}}
\safemath{\opW}{\mathbb{W}}
\safemath{\opX}{\mathbb{X}}
\safemath{\opY}{\mathbb{Y}}
\safemath{\opZ}{\mathbb{Z}}
\safemath{\opZero}{\mathbb{O}}
\safemath{\identityop}{\opI}
\safemath{\sca}{a}
\safemath{\scb}{b}
\safemath{\scc}{c}
\safemath{\scd}{d}
\safemath{\sce}{e}
\safemath{\scf}{f}
\safemath{\scg}{g}
\safemath{\sch}{h}
\safemath{\sci}{i}
\safemath{\scj}{j}
\safemath{\sck}{k}
\safemath{\scl}{l}
\safemath{\scm}{m}
\safemath{\scn}{n}
\safemath{\sco}{o}
\safemath{\scp}{p}
\safemath{\scq}{q}
\safemath{\scr}{r}
\safemath{\scs}{s}
\safemath{\sct}{t}
\safemath{\scu}{u}
\safemath{\scv}{v}
\safemath{\scw}{w}
\safemath{\scx}{x}
\safemath{\scy}{y}
\safemath{\scz}{z}
\safemath{\scA}{A}
\safemath{\scB}{B}
\safemath{\scC}{C}
\safemath{\scD}{D}
\safemath{\scE}{E}
\safemath{\scF}{F}
\safemath{\scG}{G}
\safemath{\scH}{H}
\safemath{\scI}{I}
\safemath{\scJ}{J}
\safemath{\scK}{K}
\safemath{\scL}{L}
\safemath{\scM}{M}
\safemath{\scN}{N}
\safemath{\scO}{O}
\safemath{\scP}{P}
\safemath{\scQ}{Q}
\safemath{\scR}{R}
\safemath{\scS}{S}
\safemath{\scT}{T}
\safemath{\scU}{U}
\safemath{\scV}{V}
\safemath{\scW}{W}
\safemath{\scX}{X}
\safemath{\scY}{Y}
\safemath{\scZ}{Z}
\safemath{\scalpha}{\alpha}
\safemath{\scbeta}{\beta}
\safemath{\scchi}{\chi}
\safemath{\scdelta}{\delta}
\safemath{\scepsilon}{\epsilon}
\safemath{\scvarepsilon}{\varepsilon}
\safemath{\sceta}{\eta}
\safemath{\scgamma}{\gamma}
\safemath{\sciota}{\iota}
\safemath{\sckappa}{\kappa}
\safemath{\scvarkappa}{\varkappa}
\safemath{\sclambda}{\lambda}
\safemath{\scmu}{\mu}
\safemath{\scnu}{\nu}
\safemath{\scomega}{\omega}
\safemath{\scphi}{\phi}
\safemath{\scvarphi}{\varphi}
\safemath{\scpi}{\pi}
\safemath{\scvarpi}{\varpi}
\safemath{\scpsi}{\psi}
\safemath{\scrho}{\rho}
\safemath{\scvarrho}{\varrho}
\safemath{\scsigma}{\sigma}
\safemath{\scvarsigma}{\varsigma}
\safemath{\sctau}{\tau}
\safemath{\sctheta}{\theta}
\safemath{\scvartheta}{\vartheta}
\safemath{\scupsilon}{\upsilon}
\safemath{\scxi}{\xi}
\safemath{\sczeta}{\zeta}
\safemath{\veca}{\mathrm{a}}
\safemath{\vecb}{\mathrm{b}}
\safemath{\vecc}{\mathrm{c}}
\safemath{\vecd}{\mathrm{d}}
\safemath{\vece}{\mathrm{e}}
\safemath{\vecf}{\mathrm{f}}
\safemath{\vecg}{\mathrm{g}}
\safemath{\vech}{\mathrm{h}}
\safemath{\veci}{\mathrm{i}}
\safemath{\vecj}{\mathrm{j}}
\safemath{\veck}{\mathrm{k}}
\safemath{\vecl}{\mathrm{l}}
\safemath{\vecm}{\mathrm{m}}
\safemath{\vecn}{\mathrm{n}}
\safemath{\veco}{\mathrm{o}}
\safemath{\vecp}{\mathrm{p}}
\safemath{\vecq}{\mathrm{q}}
\safemath{\vecr}{\mathrm{r}}
\safemath{\vecs}{\mathrm{s}}
\safemath{\vect}{\mathrm{t}}
\safemath{\vecu}{\mathrm{u}}
\safemath{\vecv}{\mathrm{v}}
\safemath{\vecw}{\mathrm{w}}
\safemath{\vecx}{\mathrm{x}}
\safemath{\vecy}{\mathrm{y}}
\safemath{\vecz}{\mathrm{z}}
\safemath{\veczero}{\mathrm{0}}
\safemath{\vecone}{\mathrm{1}}
\safemath{\vecalpha}{\upalpha}
\safemath{\vecbeta}{\upbeta}
\safemath{\vecchi}{\upchi}
\safemath{\vecdelta}{\updelta}
\safemath{\vecepsilon}{\upepsilon}
\safemath{\vecvarepsilon}{\upvarepsilon}
\safemath{\veceta}{\upeta}
\safemath{\vecgamma}{\upgamma}
\safemath{\veciota}{\upiota}
\safemath{\veckappa}{\upkappa}
\safemath{\veclambda}{\uplambda}
\safemath{\vecmu}{\text{\textmu}}
\safemath{\vecnu}{\upnu}
\safemath{\vecomega}{\upomega}
\safemath{\vecphi}{\upphi}
\safemath{\vecvarphi}{\upvarphi}
\safemath{\vecpi}{\uppi}
\safemath{\vecvarpi}{\upvarpi}
\safemath{\vecpsi}{\uppsi}
\safemath{\vecrho}{\uprho}
\safemath{\vecvarrho}{\upvarrho}
\safemath{\vecsigma}{\upsigma}
\safemath{\vecvarsigma}{\upvarsigma}
\safemath{\vectau}{\uptau}
\safemath{\vectheta}{\uptheta}
\safemath{\vecvartheta}{\upvartheta}
\safemath{\vecupsilon}{\upupsilon}
\safemath{\vecxi}{\upxi}
\safemath{\veczeta}{\upzeta}
\safemath{\vecac}{a}
\safemath{\vecbc}{b}
\safemath{\veccc}{c}
\safemath{\vecdc}{d}
\safemath{\vecec}{e}
\safemath{\vecfc}{f}
\safemath{\vecgc}{g}
\safemath{\vechc}{h}
\safemath{\vecic}{i}
\safemath{\vecjc}{j}
\safemath{\veckc}{k}
\safemath{\veclc}{l}
\safemath{\vecmc}{m}
\safemath{\vecnc}{n}
\safemath{\vecoc}{o}
\safemath{\vecpc}{p}
\safemath{\vecqc}{q}
\safemath{\vecrc}{r}
\safemath{\vecsc}{s}
\safemath{\vectc}{t}
\safemath{\vecuc}{u}
\safemath{\vecvc}{v}
\safemath{\vecwc}{w}
\safemath{\vecxc}{x}
\safemath{\vecyc}{y}
\safemath{\veczc}{z}
\safemath{\matA}{\mathrm{A}}
\safemath{\matB}{\mathrm{B}}
\safemath{\matC}{\mathrm{C}}
\safemath{\matD}{\mathrm{D}}
\safemath{\matE}{\mathrm{E}}
\safemath{\matF}{\mathrm{F}}
\safemath{\matG}{\mathrm{G}}
\safemath{\matH}{\mathrm{H}}
\safemath{\matI}{\mathrm{I}}
\safemath{\matJ}{\mathrm{J}}
\safemath{\matK}{\mathrm{K}}
\safemath{\matL}{\mathrm{L}}
\safemath{\matM}{\mathrm{M}}
\safemath{\matN}{\mathrm{N}}
\safemath{\matO}{\mathrm{O}}
\safemath{\matP}{\mathrm{P}}
\safemath{\matQ}{\mathrm{Q}}
\safemath{\matR}{\mathrm{R}}
\safemath{\matS}{\mathrm{S}}
\safemath{\matT}{\mathrm{T}}
\safemath{\matU}{\mathrm{U}}
\safemath{\matV}{\mathrm{V}}
\safemath{\matW}{\mathrm{W}}
\safemath{\matX}{\mathrm{X}}
\safemath{\matY}{\mathrm{Y}}
\safemath{\matZ}{\mathrm{Z}}
\safemath{\matzero}{\mathrm{0}}
\safemath{\matDelta}{\Updelta}
\safemath{\matGamma}{\Upgammma}
\safemath{\matLambda}{\Uplambda}
\safemath{\matOmega}{\Upomega}
\safemath{\matPhi}{\Upphi}
\safemath{\matPi}{\Uppi}
\safemath{\matPsi}{\Uppsi}
\safemath{\matSigma}{\Upsigma}
\safemath{\matTheta}{\Uptheta}
\safemath{\matUpsilon}{\Upupsilon}
\safemath{\matXi}{\Upxi}
\safemath{\matidentity}{\matI}
\safemath{\vecunit}{\vece} 
\safemath{\matone}{\matO}
\safemath{\matAc}{a}
\safemath{\matBc}{b}
\safemath{\matCc}{c}
\safemath{\matDc}{d}
\safemath{\matEc}{e}
\safemath{\matFc}{f}
\safemath{\matGc}{g}
\safemath{\matHc}{h}
\safemath{\matIc}{i}
\safemath{\matJc}{j}
\safemath{\matKc}{k}
\safemath{\matLc}{l}
\safemath{\matMc}{m}
\safemath{\matNc}{n}
\safemath{\matOc}{o}
\safemath{\matPc}{p}
\safemath{\matQc}{q}
\safemath{\matRc}{r}
\safemath{\matSc}{s}
\safemath{\matTc}{t}
\safemath{\matUc}{u}
\safemath{\matVc}{v}
\safemath{\matWc}{w}
\safemath{\matXc}{x}
\safemath{\matYc}{y}
\safemath{\matZc}{z}
\safemath{\rnda}{\mathsf{a}}
\safemath{\rndb}{\mathsf{b}}
\safemath{\rndc}{\mathsf{c}}
\safemath{\rndd}{\mathsf{d}}
\safemath{\rnde}{\mathsf{e}}
\safemath{\rndf}{\mathsf{f}}
\safemath{\rndg}{\mathsf{g}}
\safemath{\rndh}{\mathsf{h}}
\safemath{\rndi}{\mathsf{i}}
\safemath{\rndj}{\mathsf{j}}
\safemath{\rndk}{\mathsf{k}}
\safemath{\rndl}{\mathsf{l}}
\safemath{\rndm}{\mathsf{m}}
\safemath{\rndn}{\mathsf{n}}
\safemath{\rndo}{\mathsf{o}}
\safemath{\rndp}{\mathsf{p}}
\safemath{\rndq}{\mathsf{q}}
\safemath{\rndr}{\mathsf{r}}
\safemath{\rnds}{\mathsf{s}}
\safemath{\rndt}{\mathsf{t}}
\safemath{\rndu}{\mathsf{u}}
\safemath{\rndv}{\mathsf{v}}
\safemath{\rndw}{\mathsf{w}}
\safemath{\rndx}{\mathsf{x}}
\safemath{\rndy}{\mathsf{y}}
\safemath{\rndz}{\mathsf{z}}
\safemath{\rndA}{\bmiA}
\safemath{\rndB}{\bmiB}
\safemath{\rndC}{\bmiC}
\safemath{\rndD}{\bmiD}
\safemath{\rndE}{\bmiE}
\safemath{\rndF}{\bmiF}
\safemath{\rndG}{\bmiG}
\safemath{\rndH}{\bmiH}
\safemath{\rndI}{\bmiI}
\safemath{\rndJ}{\bmiJ}
\safemath{\rndK}{\bmiK}
\safemath{\rndL}{\bmiL}
\safemath{\rndM}{\bmiM}
\safemath{\rndN}{\bmiN}
\safemath{\rndO}{\bmiO}
\safemath{\rndP}{\bmiP}
\safemath{\rndQ}{\bmiQ}
\safemath{\rndR}{\bmiR}
\safemath{\rndS}{\bmiS}
\safemath{\rndT}{\bmiT}
\safemath{\rndU}{\bmiU}
\safemath{\rndV}{\bmiV}
\safemath{\rndW}{\bmiW}
\safemath{\rndX}{\bmiX}
\safemath{\rndY}{\bmiY}
\safemath{\rndZ}{\bmiZ}
\safemath{\rndalpha}{\bmialpha}
\safemath{\rndbeta}{\bmibeta}
\safemath{\rndchi}{\bmichi}
\safemath{\rnddelta}{\bmidelta}
\safemath{\rndepsilon}{\bmiepsilon}
\safemath{\rndvarepsilon}{\bmivarepsilon}
\safemath{\rndeta}{\bmieta}
\safemath{\rndgamma}{\bmigamma}
\safemath{\rndiota}{\bmiiota}
\safemath{\rndkappa}{\bmikappa}
\safemath{\rndlambda}{\bmilambda}
\safemath{\rndmu}{\bmimu}
\safemath{\rndnu}{\bminu}
\safemath{\rndomega}{\bmiomega}
\safemath{\rndphi}{\bmiphi}
\safemath{\rndvarphi}{\bmivarphi}
\safemath{\rndpi}{\bmipi}
\safemath{\rndvarpi}{\bmivarpi}
\safemath{\rndpsi}{\bmipsi}
\safemath{\rndrho}{\bmirho}
\safemath{\rndvarrho}{\bmivarrho}
\safemath{\rndsigma}{\bmisigma}
\safemath{\rndvarsigma}{\bmivarsigma}
\safemath{\rndtau}{\bmitau}
\safemath{\rndtheta}{\bmitheta}
\safemath{\rndvartheta}{\bmivartheta}
\safemath{\rndupsilon}{\bmiupsilon}
\safemath{\rndxi}{\bmixi}
\safemath{\rndzeta}{\bmizeta}
\safemath{\rveca}{\mathbf{a}}
\safemath{\rvecb}{\mathbf{b}}
\safemath{\rvecc}{\mathbf{c}}
\safemath{\rvecd}{\mathbf{d}}
\safemath{\rvece}{\mathbf{e}}
\safemath{\rvecf}{\mathbf{f}}
\safemath{\rvecg}{\mathbf{g}}
\safemath{\rvech}{\mathbf{h}}
\safemath{\rveci}{\mathbf{i}}
\safemath{\rvecj}{\mathbf{j}}
\safemath{\rveck}{\mathbf{k}}
\safemath{\rvecl}{\mathbf{l}}
\safemath{\rvecm}{\mathbf{m}}
\safemath{\rvecn}{\mathbf{n}}
\safemath{\rveco}{\mathbf{o}}
\safemath{\rvecp}{\mathbf{p}}
\safemath{\rvecq}{\mathbf{q}}
\safemath{\rvecr}{\mathbf{r}}
\safemath{\rvecs}{\mathbf{s}}
\safemath{\rvect}{\mathbf{t}}
\safemath{\rvecu}{\mathbf{u}}
\safemath{\rvecv}{\mathbf{v}}
\safemath{\rvecw}{\mathbf{w}}
\safemath{\rvecx}{\mathbf{x}}
\safemath{\rvecy}{\mathbf{y}}
\safemath{\rvecz}{\mathbf{z}}
\safemath{\rvecalpha}{\bmualpha}
\safemath{\rvecbeta}{\bmubeta}
\safemath{\rvecchi}{\bmuchi}
\safemath{\rvecdelta}{\bmudelta}
\safemath{\rvecepsilon}{\bmuepsilon}
\safemath{\rvecvarepsilon}{\bmuvarepsilon}
\safemath{\rveceta}{\bmueta}
\safemath{\rvecgamma}{\bmugamma}
\safemath{\rveciota}{\bmuiota}
\safemath{\rveckappa}{\bmukappa}
\safemath{\rveclambda}{\bmulambda}
\safemath{\rvecmu}{\bmumu}
\safemath{\rvecnu}{\bmunu}
\safemath{\rvecomega}{\bmuomega}
\safemath{\rvecphi}{\bmuphi}
\safemath{\rvecvarphi}{\bmuvarphi}
\safemath{\rvecpi}{\bmupi}
\safemath{\rvecvarpi}{\bmuvarpi}
\safemath{\rvecpsi}{\bmupsi}
\safemath{\rvecrho}{\bmurho}
\safemath{\rvecvarrho}{\bmuvarrho}
\safemath{\rvecsigma}{\bmusigma}
\safemath{\rvecvarsigma}{\bmuvarsigma}
\safemath{\rvectau}{\bmutau}
\safemath{\rvectheta}{\bmutheta}
\safemath{\rvecvartheta}{\bmuvartheta}
\safemath{\rvecupsilon}{\bmuupsilon}
\safemath{\rvecxi}{\bmuxi}
\safemath{\rveczeta}{\bmuzeta}
\safemath{\rvecac}{\rnda}
\safemath{\rvecbc}{\rndb}
\safemath{\rveccc}{\rndc}
\safemath{\rvecdc}{\rndd}
\safemath{\rvecec}{\rnde}
\safemath{\rvecfc}{\rndf}
\safemath{\rvecgc}{\rndg}
\safemath{\rvechc}{\rndh}
\safemath{\rvecic}{\rndi}
\safemath{\rvecjc}{\rndj}
\safemath{\rveckc}{\rndk}
\safemath{\rveclc}{\rndl}
\safemath{\rvecmc}{\rndm}
\safemath{\rvecnc}{\rndn}
\safemath{\rvecoc}{\rndo}
\safemath{\rvecpc}{\rndp}
\safemath{\rvecqc}{\rndq}
\safemath{\rvecrc}{\rndr}
\safemath{\rvecsc}{\rnds}
\safemath{\rvectc}{\rndt}
\safemath{\rvecuc}{\rndu}
\safemath{\rvecvc}{\rndv}
\safemath{\rvecwc}{\rndw}
\safemath{\rvecxc}{\rndx}
\safemath{\rvecyc}{\rndy}
\safemath{\rveczc}{\rndz}
\safemath{\rmatA}{\mathbf{A}}
\safemath{\rmatB}{\mathbf{B}}
\safemath{\rmatC}{\mathbf{C}}
\safemath{\rmatD}{\mathbf{D}}
\safemath{\rmatE}{\mathbf{E}}
\safemath{\rmatF}{\mathbf{F}}
\safemath{\rmatG}{\mathbf{G}}
\safemath{\rmatH}{\mathbf{H}}
\safemath{\rmatI}{\mathbf{I}}
\safemath{\rmatJ}{\mathbf{J}}
\safemath{\rmatK}{\mathbf{K}}
\safemath{\rmatL}{\mathbf{L}}
\safemath{\rmatM}{\mathbf{M}}
\safemath{\rmatN}{\mathbf{N}}
\safemath{\rmatO}{\mathbf{O}}
\safemath{\rmatP}{\mathbf{P}}
\safemath{\rmatQ}{\mathbf{Q}}
\safemath{\rmatR}{\mathbf{R}}
\safemath{\rmatS}{\mathbf{S}}
\safemath{\rmatT}{\mathbf{T}}
\safemath{\rmatU}{\mathbf{U}}
\safemath{\rmatV}{\mathbf{V}}
\safemath{\rmatW}{\mathbf{W}}
\safemath{\rmatX}{\mathbf{X}}
\safemath{\rmatY}{\mathbf{Y}}
\safemath{\rmatZ}{\mathbf{Z}}
\safemath{\rmatDelta}{\bmuDelta}
\safemath{\rmatGamma}{\bmuGamma}
\safemath{\rmatLambda}{\bmuLambda}
\safemath{\rmatOmega}{\bmuOmega}
\safemath{\rmatPhi}{\bmuPhi}
\safemath{\rmatPi}{\bmuPi}
\safemath{\rmatPsi}{\bmuPsi}
\safemath{\rmatSigma}{\bmuSigma}
\safemath{\rmatTheta}{\bmuTheta}
\safemath{\rmatUpsilon}{\bmuUpsilon}
\safemath{\rmatXi}{\bmuXi}
\safemath{\rmatAc}{\rnda}
\safemath{\rmatBc}{\rndb}
\safemath{\rmatCc}{\rndc}
\safemath{\rmatDc}{\rndd}
\safemath{\rmatEc}{\rnde}
\safemath{\rmatFc}{\rndf}
\safemath{\rmatGc}{\rndg}
\safemath{\rmatHc}{\rndh}
\safemath{\rmatIc}{\rndi}
\safemath{\rmatJc}{\rndj}
\safemath{\rmatKc}{\rndk}
\safemath{\rmatLc}{\rndl}
\safemath{\rmatMc}{\rndm}
\safemath{\rmatNc}{\rndn}
\safemath{\rmatOc}{\rndo}
\safemath{\rmatPc}{\rndp}
\safemath{\rmatQc}{\rndq}
\safemath{\rmatRc}{\rndr}
\safemath{\rmatSc}{\rnds}
\safemath{\rmatTc}{\rndt}
\safemath{\rmatUc}{\rndu}
\safemath{\rmatVc}{\rndv}
\safemath{\rmatWc}{\rndw}
\safemath{\rmatXc}{\rndx}
\safemath{\rmatYc}{\rndy}
\safemath{\rmatZc}{\rndz}
\newenvironment{textbmatrix}{	\setlength{\arraycolsep}{2.5pt}%
								\big[\begin{matrix}}{\end{matrix}\big]%
								\raisebox{0.08ex}{\vphantom{M}}}
 \def\btm{\begin{textbmatrix}}
 \def\etm{\end{textbmatrix}}
\DeclareMathOperator{\tr}{tr}				
\DeclareMathOperator{\diag}{diag}			
\DeclareMathOperator{\rank}{rank}			
\DeclareMathOperator{\adj}{adj}				
\safemath{\fun}{\scf}						
\safemath{\vrbl}{x}						
\safemath{\altvrbl}{y}						
\safemath{\aaltvrbl}{z}						
\safemath{\vvrbl}{\vecx}						
\safemath{\altvvrbl}{\vecy}						
\safemath{\aaltvvrbl}{\vecz}						
\safemath{\altfun}{\scg}
\safemath{\aaltfun}{\sch}
\safemath{\bel}{\sce}					
\safemath{\altbel}{\sce}					
\safemath{\frel}{g}					
\safemath{\altfrel}{g}					
\safemath{\dfrel}{\tilde{g}}					
\safemath{\altdfrel}{\tilde{g}}					
\safemath{\mat}{\matA}						
\safemath{\matc}{\matAc}						
\safemath{\altmat}{\matB}						
\safemath{\altmatc}{\matBc}						
\safemath{\vectr}{\vecu}						
\safemath{\vectrc}{\vecuc}						
\safemath{\altvectr}{\vecv}						
\safemath{\aaltvectr}{\vect}						
\safemath{\altvectrc}{\vecvc}						
\safemath{\genvar}{u}						
\safemath{\altgenvar}{v}						
\safemath{\rvectr}{\rvecu}						
\safemath{\rvectrc}{\rvecuc}						
\safemath{\raltvectr}{\rvecv}						
\safemath{\raaltvectr}{\rvect}						
\safemath{\raltvectrc}{\rvecvc}						
\safemath{\rgenvar}{\rndu}						
\safemath{\raltgenvar}{\rndv}						
\newcommand{\nullspace}{\setN}	 			
\newcommand{\ind}[1]{\chi_{#1}}				
\newcommand{\conj}[1]{\ensuremath{#1^{*}}} 	
\newcommand{\tp}[1]{\ensuremath{#1^{\mathsf{T}}}} 		
\newcommand{\inv}[1]{\ensuremath{#1^{-1}}} 	
\safemath{\dirac}{\delta}					
\safemath{\diracp}{\dirac(\time)}			
\safemath{\krond}{\dirac}					
\safemath{\indfun}{I}						
\safemath{\stepfun}{u}						
\safemath{\upto}{\uparrow}
\safemath{\downto}{\downarrow}
\safemath{\iu}{\mathrm{i}}							
\safemath{\maj}{\succ}
\newcommand{\dftmat}[1]{\matF_{#1}}			
\safemath{\mdft}{\dftmat{}}					
\safemath{\runity}{\beta}					
\safemath{\eval}{\lambda}					
\safemath{\veval}{\veclambda}				
\safemath{\littleo}{\sco}					
\let\im\undefined
\safemath{\re}{\Re}				
\safemath{\im}{\Im}				
\safemath{\euclidspace}{\complexset}			
\safemath{\confunspace}{\setC}				
\newcommand{\banachseqspace}[1]{l^{#1}}		
\safemath{\hilseqspace}{\banachseqspace{2}}	
\newcommand{\banachfunspace}[1]{\setL^{#1}}	
\safemath{\hilfunspace}{\banachfunspace{2}}	
\safemath{\hilfunspacep}{\hilfunspace(\complexset)}	
\safemath{\schwarzspace}{\setS}				
\newcommand{\hadj}[1]{#1^{\star}}			
\safemath{\SNR}{\rho} 				
\safemath{\SINR}{\text{\sc sinr}} 				
\safemath{\No}{N_0}							
\safemath{\Es}{E_s}							
\safemath{\Eb}{E_b}							
\safemath{\EbNo}{\frac{\Eb}{\No}}
\safemath{\EsNo}{\frac{\Es}{\No}}
\safemath{\NoVar}{\variance}                 
\let\time\undefined
\safemath{\time}{\sct}						
\safemath{\dtime}{\sck}						
\safemath{\delay}{\sctau}					
\safemath{\ddelay}{\scl}						
\safemath{\doppler}{\scnu}					
\safemath{\ddoppler}{\scm}					
\safemath{\freq}{\scf}						
\safemath{\dfreq}{\scn}						
\safemath{\Dtime}{\Delta\time}
\safemath{\Dfreq}{\Delta\freq}
\safemath{\Ddtime}{\dtime}
\safemath{\Ddfreq}{\dfreq}
\safemath{\bandwidth}{\scB}
\safemath{\maxdoppler}{\doppler_{0}}			
\safemath{\maxdelay}{\delay_{0}}				
\safemath{\spread}{\Delta_{\CHop}}			
\DeclareMathOperator{\CHop}{\ensuremath{\opH}} 
\safemath{\kernel}{\rndk_{\CHop}}			
\safemath{\kernelp}{\kernel(\time,\time')}	
\safemath{\tvir}{\rndh_{\CHop}}				
\safemath{\tvirp}{\tvir(\time,\delay)}		
\safemath{\tvirc}{\conj{\rndh}_{\CHop}}		
\safemath{\tvtf}{\rndl_{\CHop}}				
\safemath{\tvtfp}{\tvtf(\time,\freq)}			
\safemath{\tvtfc}{\conj{\rndl}_{\CHop}}		
\safemath{\spf}{\rnds_{\CHop}}				
\safemath{\spfp}{\spf(\doppler,\delay)}		
\safemath{\spfc}{\conj{\rnds}_{\CHop}}		
\safemath{\bff}{\rndb_{\CHop}}				
\safemath{\bffp}{\bff(\doppler,\freq)}		
\safemath{\irc}{\scr_{\rndh}}				
\safemath{\tfc}{\scr_{\rndl}}				
\safemath{\spc}{\scr_{\rnds}}				
\safemath{\bfc}{\scr_{\rndb}}				
\safemath{\scaf}{\scc_{\rnds}}				
\safemath{\scafp}{\scaf(\doppler,\delay)}		
\safemath{\ccf}{\scc_{\rndl}}				
\safemath{\ccfp}{\ccf(\Dtime,\Dfreq)}			
\safemath{\cic}{\scc_{\rndh}}				
\safemath{\cicp}{\cic(\Dtime,\delay)}			
\safemath{\mi}{I}							
\safemath{\capacity}{C}					
\DeclareMathOperator{\Prob}{\opP}		
\safemath{\normal}{\mathcal{N}}			
\safemath{\jpg}{\mathcal{CN}}			
\safemath{\uniform}{\mathcal{U}}				
\safemath{\mchain}{\leftrightarrow}		
\safemath{\dB}{\,\mathrm{dB}}
\safemath{\dBm}{\,\mathrm{dBm}}
\safemath{\Hz}{\,\mathrm{Hz}}
\safemath{\kHz}{\,\mathrm{kHz}}
\safemath{\MHz}{\,\mathrm{MHz}}
\safemath{\GHz}{\,\mathrm{GHz}}
\safemath{\s}{\,\mathrm{s}}
\safemath{\ms}{\,\mathrm{ms}}
\safemath{\mus}{\,\mathrm{\text{\textmu}s}}
\safemath{\ns}{\,\mathrm{ns}}
\safemath{\ps}{\,\mathrm{ps}}
\safemath{\meter}{\,\mathrm{m}}
\safemath{\mm}{\,\mathrm{mm}}
\safemath{\cm}{\,\mathrm{cm}}
\safemath{\m}{\,\mathrm{m}}
\safemath{\W}{\,\mathrm{W}}
\safemath{\mW}{\, \mathrm{mW}}
\safemath{\J}{\,\mathrm{J}}
\safemath{\K}{\,\mathrm{K}}
\safemath{\bit}{\,\mathrm{bit}}
\safemath{\nat}{\,\mathrm{nat}}
\safemath{\define}{\triangleq}					
\safemath{\equivalent}{\sim}
\safemath{\distas}{\sim}					
\safemath{\sdiff}{\Delta}				
\safemath{\setdiff}{\setminus}				
\safemath{\reals}{\mathbb R}
\safemath{\positivereals}{\reals^{+}}
\safemath{\integers}{\mathbb Z}
\safemath{\posint}{\integers^{+}}
\safemath{\naturals}{\mathbb N}
\safemath{\posnaturals}{\naturals^{+}}
\safemath{\complexset}{\mathbb C}
\safemath{\rationals}{\mathbb Q}
\safemath{\iSet}{\setI}
\safemath{\rel}{\bowtie}					
\safemath{\eqrel}{\sim}					
\safemath{\rlord}{\leq}					
\safemath{\slord}{<}						
\safemath{\rpord}{\preceq}				
\safemath{\rrpord}{\succeq}				
\safemath{\spord}{\prec}					
\safemath{\sig}{\sigma}					
\safemath{\metric}{d}					
\safemath{\setfun}{\Phi}					
\safemath{\measure}{\mu}					
\safemath{\altmeasure}{\lambda}					
\newcommand{\outerm}[1]{#1^{\star}}		
\newcommand{\innerm}[1]{#1_{\star}}		
\safemath{\omeasure}{\outerm{\measure}}		
\safemath{\imeasure}{\innerm{\measure}}		
\safemath{\aecol}{\colS^{\star}_{\measure}} 
\safemath{\emeasure}{\bar{\measure}_{0}}	
\safemath{\rmeasure}{\tilde{\measure}}	
\safemath{\bmeasure}{\measure_{0}}		
\safemath{\glength}{\measure_{\altfun}}	
\safemath{\lebmea}{\lambda}				
\safemath{\blebmea}{\lebmea_{0}}			
\safemath{\sfun}{s}						
\safemath{\absintspace}{\colL^{1}}		
\safemath{\sqintspace}{\colL^{2}}		
\safemath{\abssumspace}{l^{1}}		
\safemath{\sqsumspace}{l^{2}}		
\safemath{\sfield}{\setF}				
\safemath{\vectors}{\setV}				
\safemath{\vecspace}{(\vectors,\sfield)}	
\safemath{\linspace}{\setV}				
\safemath{\clinspace}{(\linspace,\sfield)} 
\safemath{\nspace}{\setU}				
\safemath{\metspace}{\setM}				
\safemath{\bspace}{\setB}				
\safemath{\ipspace}{\setG}				
\safemath{\hilspace}{\setH}				
\safemath{\blospace}{\setG}				
\safemath{\lop}{\opT}					
\safemath{\altlop}{\opS}					
\safemath{\nullsp}{\nullspace(\lop)}		
\safemath{\lfun}{l}						
\safemath{\altlfun}{g}					
\newcommand{\dual}[1]{#1^{'}}			
\safemath{\dsum}{\oplus}					
\safemath{\funspace}{\colL}				
\renewcommand{\adj}[1]{#1^{\times}}		
\safemath{\adjlop}{\adj{\lop}}			
\safemath{\hadjlop}{\hadj{\lop}}			
\safemath{\tow}{\xrightarrow{w}}			
\safemath{\tows}{\xrightarrow{w^{*}}}		
\safemath{\cparam}{\lambda}
\safemath{\lopl}{\lop_{\cparam}}		
\safemath{\iop}{\opI}					
\safemath{\resolop}{\opR}				
\safemath{\resolvent}{\resolop_{\cparam}(\lop)}	
\safemath{\reset}{\setQ}
\safemath{\spectrum}{\setS}
\safemath{\resolset}{\reset(\lop)}		
\safemath{\lopspec}{\spectrum(\lop)}		
\safemath{\pspec}{\spectrum_{p}(\lop)}	
\safemath{\cspec}{\spectrum_{c}(\lop)}	
\safemath{\rspec}{\spectrum_{r}(\lop)}	
\safemath{\ev}{\cparam}					
\newcommand{\specrad}[1]{r_{#1}}			
\safemath{\lopsrad}{\specrad{\lop}}		
\safemath{\pop}{\opP}					
\safemath{\specfam}{\colE}				
\safemath{\specop}{\opE_{\cparam}}		
\safemath{\altspecop}{\opE_{\mu}}		
\safemath{\vmulti}{\vecone}				
\safemath{\unitaryop}{\opU}				
\safemath{\sval}{\sigma}					
\safemath{\corrcoef}{\rho}				
\safemath{\sangle}{\theta}				
\let\time\undefined
\safemath{\iset}{\setI}				
\safemath{\shift}{\nu}
\safemath{\scale}{\alpha}
\safemath{\time}{t}
\safemath{\specfreq}{\theta}	
\newcommand{\transopgen}[1]{\opT_{#1}} 
\safemath{\transop}{\transopgen{\delay}}
\newcommand{\modopgen}[1]{\opM_{#1}}	
\safemath{\modop}{\modopgen{\shift}}
\newcommand{\dilopgen}[1]{\opD_{#1}}	
\safemath{\dilop}{\dilopgen{\scale}}
\safemath{\fram}{\setF}				
\safemath{\dfram}{\dual{\fram}}		
\safemath{\ufb}{B}					
\safemath{\lfb}{A}					
\safemath{\sop}{\hadj{\aop}}				
\safemath{\aop}{\opT}			
\safemath{\fop}{\opS}				
\safemath{\daop}{\tilde\opT}			
\safemath{\dsop}{\hadj{\tilde\opT}}				
\safemath{\ifop}{\inv{\fop}}			
\safemath{\rifop}{\fop^{-1/2}}			
\safemath{\transeq}{\setT}			
\safemath{\nfun}{\Phi}				
\safemath{\funvec}{\vecf}			
\safemath{\altfunvec}{\vecg}
\safemath{\samplespace}{\Omega}
\safemath{\probspace}{(\samplespace,\sfield,\Prob)}	
\safemath{\ccoef}{\rho}			
\safemath{\infstate}{\vecpi}				
\safemath{\typset}{\setA_{\epsilon}^{(N)}}	
\safemath{\expequal}{\doteq}				
\safemath{\code}{C}						
\safemath{\dstringset}{\setD^{\star}}		
\safemath{\cwlength}{l}					
\safemath{\elength}{L}					
\safemath{\extension}{C^{\star}}			
\safemath{\approaches}{\rightarrow}		
\safemath{\evnt}{\setA}					
\safemath{\altevnt}{\setB}					
\safemath{\rv}{\rndx}					
\safemath{\altrv}{\rndy}					
\safemath{\complexrv}{\rndu}					
\safemath{\altcrv}{\rndv}				
\safemath{\rvec}{\rvecx}					
\safemath{\altrvec}{\rvecy}				
\safemath{\crvec}{\rvecu}				
\safemath{\altcrvec}{\rvecv}				
\safemath{\variance}{\sigma^{2}}			
\safemath{\map}{T}						
\safemath{\jacobian}{\matJ}					
\safemath{\wvec}{\rvecw}					
\safemath{\wrv}{\rndw}					
\safemath{\orthmat}{\matQ}				
\safemath{\evmat}{\matLambda}			
\safemath{\identity}{\matidentity}		
\safemath{\innovec}{\vecv}				
\safemath{\convas}{\xrightarrow{\text{a.s.}}}	
\safemath{\convr}{\xrightarrow{\text{r}}}	
\safemath{\convp}{\xrightarrow{\text{P}}}	
\safemath{\convd}{\xrightarrow{\text{D}}}	
\safemath{\ltis}{\opL}				
\safemath{\ir}{h}					
\safemath{\tf}{\MakeUppercase{\ir}}	
\newcommand*{\fancyrefparlabelprefix}{par}		
\newcommand*{\fancyrefremlabelprefix}{rem}		
\newcommand*{\fancyrefchalabelprefix}{cha}		
\newcommand*{\fancyrefapplabelprefix}{app}		
\newcommand*{\fancyrefthmlabelprefix}{thm}		
\newcommand*{\fancyreflemlabelprefix}{lem}		
\newcommand*{\fancyrefcorlabelprefix}{cor}		
\newcommand*{\fancyrefdeflabelprefix}{def}		
\newcommand*{\fancyrefproplabelprefix}{prop}		
\theoremstyle{plain}
\newtheorem{thm}{Theorem}
\newtheorem{lem}{Lemma}
\newtheorem{dfn}{Definition}
\newtheorem{prp}{Proposition}
\theoremstyle{remark}
\newtheorem{rem}{Remark}
\newtheorem{exa}{Example}
\newcommand{\Proba}{\matP}
\renewcommand{\dots}{\!...}
\begin{document}
%
\title{Almost Lossless Analog Compression\\ without Phase Information\\[-1mm]}

\author{\IEEEauthorblockN{Erwin Riegler}
\IEEEauthorblockA{ETH Zurich, 8092 Zurich, Switzerland\\
eriegler@nari.ee.ethz.ch\\[-3mm] } \and \IEEEauthorblockN{Georg Taub\"{o}ck}\thanks{This work was supported in part by the WWTF under grant VRG 12-009 and by the FWF under grant Y 551-N13.}
\IEEEauthorblockA{Austrian Academy of Sciences, 1040 Vienna, Austria\\
georg.tauboeck@oeaw.ac.at\\[-3mm] }}

\maketitle

\begin{abstract}
We propose an information-theoretic framework for phase retrieval. Specifically, we consider the problem of recovering an unknown vector $\vecx\in\reals^n$ up to an overall sign factor from $m=\lfloor Rn\rfloor$ phaseless measurements with compression rate $R$ 
and derive a general achievability bound for $R$.
Surprisingly, it turns out that this bound on the compression rate is the same as the one for almost lossless analog compression obtained by Wu and Verd\'u (2010):
Phaseless linear
measurements are ``as good'' as linear measurements with full phase
information in the sense that ignoring the sign  of $m$ measurements  only leaves us with an
ambiguity with respect to  an overall sign factor of $\vecx$.
\end{abstract}


%
\IEEEpeerreviewmaketitle


\section{Introduction}
In many different areas of science, physical limitations make it impossible
to measure the sign (phase in the complex case) of a signal but obtaining amplitudes is
relatively easy. Well known examples are X-ray crystallography, astronomy, or
diffraction imaging \cite{mill90,fienup93,qui10}.
The problem of retrieving a signal up to a global sign (phase in the complex case)
from intensity measurements is often referred to as \emph{phase
retrieval}. More formally, let $\reals_\sim^n$ be the set of equivalence classes
 $[\vecx]=\{\vecx\}\cup\{-\vecx\}$ with $\vecx\in\reals^n$. Phase retrieval  is the problem of recovering
$[\vecx]\in\reals_\sim^n$ from  $m$ phaseless measurements of the
form\footnote{For a vector $\vecu\in\reals^k$, we define the element-wise
absolute value operation as $|\vecu|=\tp{(|u_1|,\dots,|u_k|)}$.}
$\vecy=|\matA\vecx| \in \reals^m$
with measurement matrix $\matA\in\reals^{m\times n}$.

It is by no means clear how large $m$ has to be to allow for recovery of
$[\vecx]\in\reals^n_\sim$ from $m$ phaseless measurements. Thus
from the very beginning, there have been a number of works regarding
recovery  conditions for this problem in the context of specific
applications \cite{bruck79}. More recently, this question has been studied in
more abstract terms, asking for the smallest number  $m$ of phaseless measurements that
is required to make the mapping
$[\vecx]\mapsto |\matA\vecx|$ injective  without imposing structural
assumptions on $\matA$. In \cite{balan06}, the authors showed that
at least $2n-1$ such measurements are necessary and generically sufficient to
guarantee injectivity. 
Furthermore, it was shown that semidefinite programming can be used to
recover $[\vecx]$ if $\rmatA$ is random with i.i.d. Gaussian entries or with
i.i.d. rows that are uniformly distributed on a sphere, as long as $m \geq
c_0 n$ for a sufficiently large constant $c_0$ \cite{cand12a}. Other
phase retrieval methods for which theoretical performance guarantees are
available can be found, e.g., in \cite{cand13a,wald12,netr12,cand14a}.


Recently, there has been also interest in \emph{sparse phase retrieval},
where the number $s$ of nonzero coefficients of the vector $\vecx$ is much
smaller than  $n$. This a-priori knowledge about $\vecx$
can be used to reduce the number of measurements significantly. For instance,
$\mathcal{O}(s \log(n/s))$ measurements were shown to be sufficient for
stable sparse phase retrieval \cite{eldar14}.  If the rows of the measurement
matrix $\matA$ are a generic choice of vectors in $\reals^n$, injectivity  of
the mapping $[\vecx]\mapsto |\matA\vecx|$ is guaranteed provided that
$m\geq 2s$ \cite{akta14}.


\emph{Contributions:}
Following  the approach introduced for compressed sensing \cite{wuve10} and signal separation \cite{stribo13}
 problems,  we formulate phase retrieval as an analog source coding problem.
Assuming that the unknown vector $\rvecx$ is random with a certain distribution,
we derive asymptotic recovery results for $[\rvecx]$. Our results hold for Lebesgue almost all
(a.a.) measurement matrices $\matA$. However, our results are in terms of probability of error (with respect to the distribution
of $\rvecx$) 
and hence
do not provide worst-case guarantees.
Specifically, we study the asymptotic setting $n \rightarrow \infty$ where the vector $\vecx$ is a realization of a
random process; for each $n$, we let $m=\lfloor Rn \rfloor$ for a parameter
$R$, which we denote \emph{compression rate}.  
In Theorem \ref{thm1} we show that we can recover $[\rvecx]$ from $m$
phaseless measurements with arbitrarily small probability of error for a.a.
measurement matrices $\matA$, provided that $n$ is sufficiently large and the
compression rate $R$ is larger than the (lower) Minkowski dimension
compression rate (see Definition \ref{dfndimrate}) of  $\rvecx$. It is
remarkable that the obtained result is identical to the corresponding result
in compressive sensing \cite{wuve10} where $\vecy=\matA\vecx$, so that we can
conclude that \emph{in terms of achievability results, phaseless linear
measurements are ``as good'' as linear measurements with full phase
information:} Ignoring the sign  of $m$ measurements  only leaves us with an
ambiguity with respect to  an overall sign factor of $\vecx$.


\emph{Notation:}
Roman letters $\matA,\matB,\ldots$ and $\veca,\vecb,\ldots$
designate deterministic matrices and vectors, respectively.
Boldface letters $\rmatA,\rmatB,\ldots$ and $\rveca,\rvecb,\ldots$ denote random matrices and
random vectors, respectively.
For the distribution of a random matrix $\rmatA$ and a random vector $\rveca$, we write $\mu_\rmatA$ and  $\mu_\rveca$, respectively.
The $i$th component of the vector $\vecu$ (random vector $\rvecu$) is $\vecuc_i$ ($\rvecuc_i$).
The superscript  $\tp{}$ stands for transposition. For a matrix $\matA$, $\tr(\matA)$ denotes its trace.  
The identity matrix of suitable size is denoted by $\matI$.
For a vector $\vecu$, we write $\|\vecu\|=\sqrt{\tp{\vecu}\vecu}$ for its Euclidean norm.
For the Euclidean space $(\reals^k,\|\cdot\|)$, we denote the open ball of
radius $r$ centered at $\vecu\in \reals^k$ by $\setB_k(\vecu,r)$, $V(k,r)$
stands for its volume.
The Borel sigma algebra on $\reals$ is denoted by $\colB_\reals$.
We write $\reals_\geq$ for the set of nonnegative real numbers with Borel sigma algebra $\colB_{\reals_\geq}$.
For $\vecu,\vecv\in\reals^k$, $\vecu\equivalent \vecv$ means that either $\vecu=\vecv$ or $\vecu=-\vecv$ and we write  for the corresponding equivalence classes $[\vecu]=\{\vecu\}\cup\{-\vecu\}$. 
For a set $\setS\subseteq\reals^k$, $\setS_\sim=\{[\vecu] \mid \vecu\in\setS\}$.
The indicator function on a set $\setU$ is denoted by $\ind{\setU}$.



\section{Main Results}
We start by formulating phase retrieval as a  source coding problem.

\begin{dfn} (Source vector)\label{dfnsource}
Let $(\rvecxc_i)_{i\in\naturals}$ be a stochastic process on
$(\reals^\naturals,\colB_\reals^{\otimes\naturals})$. Then, for
$n\in\naturals$, the source vector $\rvecx$ of length $n$ is given by
$\rvecx=\tp{(\rvecxc_1,\dots,\rvecxc_n)}\in\reals^n$.
\end{dfn}

\begin{dfn} (Code, achievable rate) \label{dfncode}
For $\rvecx$ as in Definition \ref{dfnsource} and $\varepsilon >0$, an
$(n,m)$ code consists of
\begin{enumerate}[(i)]
\item measurements $|\matA \cdot|:\reals^n\to \reals^m_\geq$;
\item a decoder $g: \reals^m_\geq\to \reals^n$ that is measurable with
    respect to $\colB_{\reals_\geq}^{\otimes m}$ and
    $\colB_\reals^{\otimes n}$.
\end{enumerate}
We call $R$ with $0< R\leq 1$ an $\varepsilon$-achievable rate if there
exists an $N(\varepsilon)\in\naturals$ and a sequence of $(n,\lfloor
Rn\rfloor)$ codes with decoders $g$ such that
\begin{align*}
\Proba[g(|\matA \rvecx|)\not\equivalent \rvecx ] \leq \varepsilon
\end{align*}
for all $n\geq N(\varepsilon)$.
\end{dfn}

Next, we introduce the Minkowski dimension compression rate for source
vectors. 

\begin{dfn} (Minkowski dimension)\label{dfndim}
Let $\setU$ be a nonempty bounded set in $\reals^n$. The lower Minkowski
dimension of $\setU$ is defined as
\begin{align*}
\underline{\dim}_\mathrm{B}(\setU)=\liminf_{\rho\to 0} \frac{\log N_\setU(\rho)}{\log \frac{1}{\rho}}
\end{align*}
and the upper Minkowski dimension of $\setU$ is defined as
\begin{align*}
\overline{\dim}_\mathrm{B}(\setU)=\limsup_{\rho\to 0} \frac{\log N_\setU(\rho)}{\log \frac{1}{\rho}}
\end{align*}
where $N_\setU(\rho)$ is the covering number of $\setU$ given by
\begin{align*}
N_\setU(\rho)&=\min\Big\{k \in\naturals\mid \setU\subseteq \bigcup_{i\in\{1,\dots,k\}} \setB_n(\vecu_i,\rho),\ \vecu_i\in \reals^n\Big\}.
\end{align*}
If $\underline{\dim}_\mathrm{B}(\setU)=\overline{\dim}_\mathrm{B}(\setU)$, we
write $\dim_\mathrm{B}(\setU)$.
\end{dfn}

\begin{dfn}(Minkowski dimension compression rate)\label{dfndimrate}
For $\rvecx$ from Definition \ref{dfnsource} and $\varepsilon >0$, we define
the lower Minkowski dimension compression rate as
\begin{align}
\underline{R}_\mathrm{B}(\varepsilon)&=\limsup_{n\to\infty} \underline{a}_n(\varepsilon),\quad\text{where} \nonumber \\
\underline{a}_n(\varepsilon)&=\inf\Big\{\frac{\underline{\dim}_\mathrm{B}(\setU)}{n}   \;\Big\vert\;  \setU \subset\reals^n,\ \Proba[\rvecx\in\setU]\ \geq 1-\varepsilon\Big\}.\nonumber 
\end{align}
and the upper Minkowski dimension compression rate as
\begin{align}
\overline{R}_\mathrm{B}(\varepsilon)&=\limsup_{n\to\infty} \overline{a}_n(\varepsilon),\quad\text{where} \nonumber \\
\overline{a}_n(\varepsilon)&=\inf\Big\{\frac{\overline{\dim}_\mathrm{B}(\setU)}{n}   \;\Big\vert\;  \setU \subset\reals^n,\ \Proba[\rvecx\in\setU]\ \geq 1-\varepsilon\Big\}.\nonumber 
\end{align}
The sets $\setU$ in the definitions for $\underline{a}_n(\varepsilon)$ and
$\overline{a}_n(\varepsilon)$ are assumed to be nonempty and bounded.
\end{dfn}

\begin{exa}\label{exa1}
The source vector $\rvecx$ from Definition \ref{dfnsource} has a
mixed discrete-continuous distribution if for each $n\in \naturals$ the
random variables $\rvecxc_i$, $i\in\{1,\dots,n\}$, are independent and
distributed according to
\begin{align*}
\mu_{\rvecxc_i}=(1-\lambda)\mu_{{d}}+\lambda\mu_{{c}},\quad i\in\{1,\dots, n\}
\end{align*}
where $0\leq \lambda\leq 1$ is the mixing parameter, $\mu_{{c}}$ is a
distribution on $(\reals,\setB_\reals)$, absolutely continuous with respect
to Lebesgue measure, and $\mu_{{d}}$ is a discrete distribution. Then,
\cite[Th. 15]{wuve10}
\begin{align}
\underline{R}_\mathrm{B}(\varepsilon)=\overline{R}_\mathrm{B}(\varepsilon)= \lambda,\quad 0<\varepsilon <1.\nonumber
\end{align}
\end{exa}

The following result states that every rate  $R>\underline{R}_\mathrm{B}(\varepsilon)$ is $\varepsilon$-achievable for Lebesgue a.a. matrices $\matA$. 
\begin{thm}\label{thm1}
Let $0<\varepsilon < 1$ and $\rvecx$ as in Definition \ref{dfnsource}.
Then, for Lebesgue a.a. matrices
$\matA\in\reals^{m\times n}$ with $m=\lfloor Rn\rfloor$,  $R$ is an
$\varepsilon$-achievable rate provided that $R>\underline{R}_\mathrm{B}(\varepsilon)$.
\end{thm}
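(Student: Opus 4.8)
The plan is to reduce the phaseless problem to the ordinary full‑phase problem by decomposing over the $2^m$ sign ambiguities, and then to kill every resulting piece with a single covering/transversality estimate for Gaussian matrices. First I would fix $n$ so large that $\underline{a}_n(\varepsilon)<R$ --- possible since $\underline{R}_\mathrm{B}(\varepsilon)=\limsup_n\underline{a}_n(\varepsilon)<R$ --- and pick a bounded $\setU=\setU_n\subset\reals^n$ with $\Proba[\rvecx\in\setU]\ge 1-\varepsilon$ and $\underline{\dim}_\mathrm{B}(\setU)/n<R$; replacing $\setU$ by $\overline{\setU}\cup(-\overline{\setU})$ (which changes none of these properties) I may assume $\setU$ compact and symmetric, and I fix $d'$ with $\underline{\dim}_\mathrm{B}(\setU)<d'<m$. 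I would use the decoder $g$ that outputs a measurably chosen representative of the unique class $[\vecx']$ with $\vecx'\in\setU$ and $|\matA\vecx'|=\vecy$ when such a class exists and is unique, and $\mathbf 0$ otherwise (measurability of $g$ being a routine measurable‑selection point, as in \cite{wuve10}, using compactness of $\setU$). A decoding error then forces $\rvecx\notin\setU$ (probability $\le\varepsilon$) or the event that $\rvecx\in\setU$ and some $\vecx'\in\setU$ with $[\vecx']\neq[\rvecx]$ satisfies $|\matA\vecx'|=|\matA\rvecx|$. Since Lebesgue measure and the standard Gaussian measure on $\reals^{m\times n}$ are mutually absolutely continuous, it suffices to show the latter event has probability $0$ for a Gaussian $\matA$; and by Fubini this reduces to showing, for \emph{every} fixed $\vecx_0\in\reals^n$, that $\Proba_{\matA}[\exists\,\vecx'\in\setU:[\vecx']\neq[\vecx_0],\ |\matA\vecx'|=|\matA\vecx_0|]=0$.

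Next I would expand the phaseless constraint. Row by row, $|\matA\vecx'|=|\matA\vecx_0|$ holds exactly when $\matA\vecx'=\diag(\bm{\sigma})\matA\vecx_0$ for some $\bm{\sigma}\in\{-1,1\}^m$; writing $S_\pm=\{i:\sigma_i=\pm1\}$ and letting $\matA_S$ denote the submatrix of rows indexed by $S$, this says $\matA_{S_+}(\vecx'-\vecx_0)=\mathbf 0$ and $\matA_{S_-}(\vecx'+\vecx_0)=\mathbf 0$. A union bound over the $2^m$ patterns, together with a countable decomposition of $\{\vecx'\in\setU:[\vecx']\neq[\vecx_0]\}$ into a ``both far from $\pm\vecx_0$'' piece and dyadic shells around $\vecx_0$ and around $-\vecx_0$, leaves one piece $\setW$ at a time on which $\|\vecx'-\vecx_0\|\ge c_+$ and $\|\vecx'+\vecx_0\|\ge c_-$ for explicit constants $c_\pm>0$ (the triangle inequality on each shell supplies the ``other'' lower bound); note $\setW\subseteq\setU$, so $\underline{\dim}_\mathrm{B}(\setW)<d'$ and $N_\setW\le N_\setU$. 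The degenerate case $\vecx_0=\mathbf 0$ collapses to $\matA\vecx'=\mathbf 0$ and is an instance of what follows.

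The heart is a covering estimate for each such $\setW$, which handles all patterns $\bm{\sigma}$ uniformly (including $S_\pm=\emptyset$). Using $\underline{\dim}_\mathrm{B}(\setU)<d'<m$, pick scales $\rho_k\downarrow 0$ with $N_\setU(\rho_k)\le\rho_k^{-d'}$ and cover $\setW$ by $\le\rho_k^{-d'}$ balls of radius $\rho_k$ centred at points $\vecu_j$ with $\|\vecu_j\mp\vecx_0\|\ge c_\pm/2$. If some $\vecx'$ in the $j$‑th ball satisfies the two kernel conditions, then $\|\matA_{S_+}(\vecu_j-\vecx_0)\|\le\frobnorm{\matA}\rho_k$ and $\|\matA_{S_-}(\vecu_j+\vecx_0)\|\le\frobnorm{\matA}\rho_k$. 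Since $\matA_{S_+}$ and $\matA_{S_-}$ are \emph{independent} row blocks and, for any fixed nonzero $\vecw$, the ratio $\|\matA_S\vecw\|/\|\vecw\|$ is chi‑distributed with $\card{S}$ degrees of freedom (so $\Proba[\|\matA_S\vecw\|\le u]\le C(u/\|\vecw\|)^{\card{S}}$), a union bound over $j$ gives, for any threshold $L$, a bound of order $\Proba[\frobnorm{\matA}>L]+C'\rho_k^{-d'}(L\rho_k)^{\card{S_+}+\card{S_-}}=\Proba[\frobnorm{\matA}>L]+C'\rho_k^{-d'}(L\rho_k)^{m}$. Taking $L=L_k:=\sqrt{(2d'/c_0)\log(1/\rho_k)}$ (with $c_0$ a Gaussian concentration constant for $\frobnorm{\matA}$) makes $\rho_k^{-d'}\Proba[\frobnorm{\matA}>L_k]\le\rho_k^{d'}\to 0$, while $C'\rho_k^{-d'}(L_k\rho_k)^m=C'\bigl(\log(1/\rho_k)\bigr)^{m/2}\rho_k^{\,m-d'}\to 0$ because $m>d'$. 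The scale‑free bad event lies inside the $k$‑th event for \emph{every} $k$, hence has probability $0$; summing over the countably many pieces $\setW$ and finitely many $\bm{\sigma}$ finishes the reduction, and taking $N(\varepsilon)$ to be any $n_0$ with $\underline{a}_n(\varepsilon)<R$ for all $n\ge n_0$ completes the proof.

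The step I expect to be the real obstacle --- and it is exactly what makes the phaseless rate coincide with the full‑phase rate of \cite{wuve10} rather than being twice as large --- is avoiding the two apparent factors of $2$. First, conditioning on $\rvecx=\vecx_0$ \emph{before} randomizing $\matA$ means only the translates $\setU\pm\vecx_0$ appear, which have lower Minkowski dimension $\underline{\dim}_\mathrm{B}(\setU)$, not the larger dimension of the difference set $\setU-\setU$; if one instead insisted on injectivity of $|\matA\cdot|$ on all of $\setU$ one would be forced into the difference set and lose a factor $2$. Second, for a genuinely mixed sign pattern the vectors $\vecx'-\vecx_0$ and $\vecx'+\vecx_0$ are tested against \emph{disjoint, hence independent} blocks of $\matA$, so the failure probability carries the full exponent $\card{S_+}+\card{S_-}=m$ --- precisely what beats the covering number $N_\setU(\rho_k)\le\rho_k^{-d'}$ with $d'<m$. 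The remaining work --- the measurable‑selection verification for $g$, and the bookkeeping in the decomposition that keeps every relevant vector bounded away from $\mathbf 0$ so the anti‑concentration bound applies --- I expect to be routine but somewhat tedious.
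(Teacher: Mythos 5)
Your proposal is correct, but it reaches the result by a genuinely different route than the paper. The paper fixes $\setU$ and a decoder of the same type, and passes through the same Fubini reduction to showing that, for every fixed $\vecx_0\in\setU$, for almost every $\matA$ there is no $\vecx'\in\setU$ with $[\vecx']\neq[\vecx_0]$ and $|\matA\vecx'|=|\matA\vecx_0|$; up to that point you agree. From there the paper treats the phaseless constraint row by row as the vanishing of the \emph{indefinite quadratic form} $\tp{\veca_i}(\vecx'\tp{\vecx'}-\vecx_0\tp{\vecx_0})\veca_i$, separates the collinear case ($P_2$) from the rank-two case ($P_1$), quantifies linear independence via the sets $\setT_j(\vecx_0)$, and then invokes a bespoke anti-concentration estimate for such symmetric rank-two forms under rows uniform on a ball (Lemma~\ref{lemcomindep}, built on the eigenstructure worked out in Lemma~\ref{lemQR}) --- precisely the ``novel concentration of measure result'' that Remark~\ref{remmc} flags as the technical crux. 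You instead decompose the phaseless constraint over the $2^m$ row sign patterns, which converts it into two \emph{linear} kernel conditions $\matA_{S_+}(\vecx'-\vecx_0)=\veczero$ and $\matA_{S_-}(\vecx'+\vecx_0)=\veczero$ tested against disjoint, hence independent, blocks of rows; the elementary coordinatewise Gaussian anti-concentration $\Proba\big[\|\matA_S\vecw\|\le u\big]\le\big(\sqrt{2/\pi}\,u/\|\vecw\|\big)^{\card{S}}$ then supplies the full exponent $\card{S_+}+\card{S_-}=m$, and your dyadic-shell decomposition simultaneously absorbs the quantitative-separation bookkeeping and the collinear case $\vecx'\in\spn\{\vecx_0\}\setminus\{\pm\vecx_0\}$ that the paper treats separately as $P_2$. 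The $2^m$ union bound and the $|S|$-dependent constants are harmless because $n$, hence $m$, is fixed when one proves the per-$n$ null event. In exchange: your argument is more elementary and makes the indefinite-quadratic-form lemma unnecessary, at the cost of a coarser (sign-pattern) decomposition; the paper's Lemmas~\ref{lemcomindep} and~\ref{lemQR} give a direct, constant-explicit small-ball bound for $\tp{\rveca}(\vecu\tp{\vecu}-\vecv\tp{\vecv})\rveca$ that is plausibly of independent interest for the rank-one matrix recovery problems mentioned in Remark~\ref{remmc}. Both approaches exploit row independence to beat the covering exponent $d'<m$, and both land on the same lower Minkowski dimension threshold.
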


\begin{proof}
Since $R>\underline{R}_\mathrm{B}(\varepsilon)$ and $m=\lfloor Rn\rfloor$, 
Definition \ref{dfndimrate} implies that there exists a sequence of nonempty
bounded sets $\setU_n\subseteq \reals^n$ and an $N(\varepsilon)\in\naturals$
such that
\begin{align}
\underline{\dim}_\mathrm{B}(\setU) &< m\label{eq:nk}\\
\Proba\big[\rvecx\in\setU\big]&\geq 1-\varepsilon \label{eq:ek}
\end{align}
for all $\setU=\setU_n$ with $n\geq N(\varepsilon)$. In the remainder of the
proof we assume that $n$ is sufficiently large for  \eqref{eq:nk} and
\eqref{eq:ek} to hold.
The claim now follows from Proposition \ref{pro1} below.
\end{proof}

\begin{prp}\label{pro1}
Let $\varepsilon \geq 0$, $\rvecx\in\reals^n$ a random vector, and 
$\setU\subseteq\reals^n$  a nonempty bounded set with
$\Proba[\rvecx\in\setU]\geq 1-\varepsilon$. Then, for Lebesgue
a.a. matrices $\matA\in\reals^{m\times n}$, there exists a decoder $g$ with
$\Proba[g(|\matA \rvecx|)\not\equivalent \rvecx ] \leq \varepsilon$ provided
that $\underline{\dim}_\mathrm{B}(\setU) < m$.
\end{prp}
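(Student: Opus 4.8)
The plan is to reduce the phaseless recovery problem to a statement about injectivity of the map $[\vecx]\mapsto|\matA\vecx|$ restricted to the set $\setU$, for almost all $\matA$, and then build the decoder $g$ as a measurable selector of a (near-)inverse. The decoder only needs to succeed on the event $\{\rvecx\in\setU\}$, which already has probability $\geq 1-\varepsilon$; outside that event we may let $g$ output anything. So it suffices to show: for Lebesgue-a.a.\ $\matA\in\reals^{m\times n}$, there is a measurable map $g:\reals^m_\geq\to\reals^n$ with $g(|\matA\vecx|)\equivalent\vecx$ for \emph{all} $\vecx\in\setU$.

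First I would dispose of the measurability/selection issue: if $|\matA\cdot|$ is injective on $\setU_\sim$, then on its image $\matA$ admits a Borel-measurable one-sided inverse (the image of a bounded Borel set under a continuous injective-on-equivalence-classes map is analytic, and one invokes a measurable selection theorem such as Kuratowski--Ryll-Nardzewski, or simply exploits that $\setU$ is contained in a countable union of compacta on which the map is a homeomorphism onto its image). Composing with the quotient and extending arbitrarily off the image yields the required $g$. So the real content is the \emph{injectivity} claim.

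For injectivity, the key observation is that $|\matA\vecx|=|\matA\vecx'|$ means that for each row $\matA_i$ we have $\langle\matA_i,\vecx\rangle=\pm\langle\matA_i,\vecx'\rangle$, i.e.\ $\matA_i\perp(\vecx-\vecx')$ or $\matA_i\perp(\vecx+\vecx')$. Equivalently, $\matA(\vecx-\vecx')$ and $\matA(\vecx+\vecx')$ have disjoint supports as index sets, so in particular $\matA$ maps the two-dimensional space $\spn\{\vecx,\vecx'\}$ (when $[\vecx]\neq[\vecx']$, $\vecx\pm\vecx'$ are both nonzero and independent) into a pair of complementary coordinate subspaces of $\reals^m$. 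The plan is to show this forces $\vecx$ and $\vecx'$ to be close on a scale controlled by $\rho$, and then to run the Minkowski-dimension covering argument exactly as in Wu--Verd\'u. Concretely: cover $\setU$ by $N_\setU(\rho)$ balls of radius $\rho$; for a failure of injectivity between two \emph{distinct} equivalence classes, pick representatives from two of these balls (or the same ball). For each ordered pair of centers $(\vecu_i,\vecu_j)$ and each of the $2^m$ sign patterns $\sigma\in\{+,-\}^m$ determining which of $\vecx-\vecx'$, $\vecx+\vecx'$ each row kills, the set of bad $\matA$ is contained in a low-dimensional algebraic variety: imposing $\matA_k(\vecx-\vecx')=0$ or $\matA_k(\vecx+\vecx')=0$ for every $k$, together with $\vecx,\vecx'$ ranging over small balls, confines $\matA$ to a set of Lebesgue measure tending to $0$ as $\rho\to0$, because the number of linear constraints ($m$, one per row) exceeds the ``budget'' once $\underline{\dim}_\mathrm{B}(\setU)/n$ is strictly below $m/n$ — more precisely, the relevant bad set has a Minkowski-type bound of the form $N_\setU(\rho)^2\,2^m\,(\text{vol of an }\epsilon\text{-neighborhood of a codim-}1\text{ variety})$, and along a subsequence $\rho_j\to0$ realizing the liminf, $N_\setU(\rho_j)\le\rho_j^{-(\underline{\dim}_\mathrm{B}(\setU)+\eta)n}$ while the per-slab measure is $\rho_j$ to a power that wins because $m>\underline{\dim}_\mathrm{B}(\setU)$. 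Taking a countable intersection over $\rho_j\to0$ of the complements of these bad sets yields a full-measure set of good $\matA$.

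\textbf{The main obstacle} I anticipate is handling the sign-pattern combinatorics cleanly: unlike the plain compressed-sensing case (where one only needs $\matA$ injective on the difference set $\setU-\setU$), here one must union-bound over all $2^m$ ways of assigning each measurement to ``$\vecx-\vecx'$ is killed'' versus ``$\vecx+\vecx'$ is killed,'' and argue that each such branch individually forces a rank/dimension deficiency that is generically avoided. The factor $2^m$ is harmless because it is dominated by the covering-number savings (the exponent gap between $m$ and $\underline{\dim}_\mathrm{B}(\setU)$ is linear in $n$, hence beats $2^m=2^{\lfloor Rn\rfloor}$ only if one is careful — in fact one wants the measure-of-bad-$\matA$ estimate per branch to decay like $\rho^{cn}$ with $c>0$ fixed, so that $2^m\,N_\setU(\rho)^2\,\rho^{cn}\to0$; this needs $c > R + (\text{dimension exponent})\cdot(\text{something})$, and the Wu--Verd\'u argument is engineered precisely so that replacing their single constraint per row by ``one of two linear constraints per row'' still leaves enough room). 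A secondary subtlety is the degenerate case $\vecx'\in[\vecx]$, which must be excluded from ``failure'' by definition of $\equivalent$, and the edge cases where $\vecx+\vecx'$ or $\vecx-\vecx'$ vanishes (handled separately and trivially). Modulo these bookkeeping points, the proof is a direct adaptation of the almost-lossless analog compression argument, with the phase ambiguity absorbed into the enlarged but still negligible union over sign patterns.
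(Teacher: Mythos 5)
Your reduction step is where things go wrong. You propose to show that for a.a.\ $\matA$ the map $|\matA\cdot|$ is \emph{injective} on $\setU_\sim$ and then to select a measurable inverse. That is a strictly stronger, worst-case statement, and it is \emph{not} implied by $\underline{\dim}_{\mathrm B}(\setU)<m$. In your own accounting the bad set of matrices is bounded by roughly $N_\setU(\rho)^2 \cdot 2^m \cdot (\text{per-row thickness})^m \asymp \rho^{\,m-2\underline{\dim}_{\mathrm B}(\setU)}\cdot 2^m$: because you have to simultaneously range \emph{both} representatives $\vecx,\vecx'$ over $\setU$, the covering number enters squared, and the scheme only closes when $m>2\underline{\dim}_{\mathrm B}(\setU)$ — a factor of two worse than claimed. (The $2^m$ you worry about, on the other hand, is innocuous here: $m$ and $n$ are fixed in Proposition~\ref{pro1} and you send $\rho\to0$, so $2^m$ is just a constant; you have conflated the $\rho\to0$ limit of this proposition with the $n\to\infty$ limit used for the rate in Theorem~\ref{thm1}.) The paper deliberately avoids proving injectivity: after conditioning on $\rvecx\in\setU$, it fixes a deterministic $\vecx\in\setU$ and shows, for \emph{that single} $\vecx$, that the set of matrices $\matA$ admitting some $\vecu\in\setU$ with $\vecu\not\equivalent\vecx$ and $|\matA\vecu|=|\matA\vecx|$ has measure zero. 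Only one copy of the covering number appears, and $m>\underline{\dim}_{\mathrm B}(\setU)$ suffices. One then Fubinis over the product of $\mu_\rvecx$ and the uniform row distribution of $\rmatA$ to transfer this to "for a.a.\ $\matA$, a.s.\ in $\rvecx$" — a genuinely weaker conclusion than injectivity, consistent with the paper's explicit caveat that the results give probabilistic rather than worst-case guarantees.

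Two smaller remarks. First, your per-row geometric observation is correct and useful: indeed $|\matA\vecu|=|\matA\vecx|$ forces each row into $\{\veca:\tp{\veca}(\vecu-\vecx)=0\}\cup\{\veca:\tp{\veca}(\vecu+\vecx)=0\}$, and after covering one could thicken both hyperplanes by $O(\rho)$; but to turn the thickness into an $O(\rho)$ measure bound you need $\|\vecu\pm\vecx\|$ bounded away from zero, which is precisely what the paper's slicing into the sets $\setT_j(\vecx)$ (defined through $\sqrt{\|\vecu\|^2\|\vecx\|^2-|\tp{\vecu}\vecx|^2}>1/j$) delivers — your sketch is silent on this degeneracy along the line $\reals\vecx$. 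Second, the measurable-selection machinery is unnecessary: once the per-$\vecx$ fiber claim is in hand, the explicit decoder that outputs the unique sign-normalized element of $\{\vecu:\pm\vecu\in\setU,\ |\matA\vecu|=\vecy\}$ when it is a singleton, and a fixed error token otherwise, is Borel measurable and suffices.
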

\begin{proof}
See Section \ref{proofthm1}.
\end{proof}

\begin{rem}
By \cite[Sec. 3.2, Properties (i)--(iii)]{fa90}, the lower Minkowski
dimension of any bounded nonempty subset in $\reals^n$ containing only
vectors
with no more than  $s$  nonzero entries  is at most $s$.
Therefore,  Proposition  \ref{pro1} implies that any $s$-sparse random vector
$\rvecx\in\reals^n$ can be recovered with arbitrarily  small probability of
error (by increasing the size of the set $\setU$ in Proposition \ref{pro1}),
provided that $m>s$.  This result holds for an arbitrary distribution of
$\rvecx$ and a.a. matrices  $\matA\in\reals^{m\times n}$. The best known
recovery threshold for deterministic $s$-sparse vectors is $m\geq 2s$
\cite{akta14}.

%
%
\end{rem}
\begin{rem}\label{remmc}
It is worth noting that formally phase retrieval can be formulated as a
matrix completion problem with measurements
$\vecyc_i^2=\tr(\veca_i\tp{\veca_i}\vecx\tp{\vecx})$ using  rank-one
measurement matrices $\matA_i=\veca_i\tp{\veca_i}$, $i=1,\dots,m$. However,
compared to the rank-one measurement matrices used in the matrix completion
problem \cite{cazh14,ristbo15}, the matrices $\veca_i\tp{\veca_i}$ are
symmetric. This complicates the proof of Proposition \ref{pro1} 
significantly and forces us to develop a novel
concentration of measure result (Lemma \ref{lemcomindep}). On the other hand, in phase retrieval we are interested in
recovering symmetric rank-one matrices $\vecx\tp{\vecx}$ (which is equivalent
to the recovery of $[\vecx]$), whereas matrix completion deals with the
recovery of arbitrary low-rank matrices.
\end{rem}
In the mixed discrete-continuous case we can strengthen the result of Theorem \ref{thm1} through the following lemma.
\begin{lem}
Let $0<\varepsilon < 1$ and $\rvecx$ be distributed according to the mixed
discrete-continuous distribution in Example \ref{exa1} with mixing parameter
$\lambda$. Then, for Lebesgue a.a. matrices $\matA\in\reals^{m\times n}$ with
$m=\lfloor Rn\rfloor$, $R$ is $\varepsilon$-achievable provided that
$R>\lambda$. Moreover, $R\geq \lambda$ is also a necessary condition for $R$
being $\varepsilon$-achievable.
\end{lem}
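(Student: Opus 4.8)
The plan is to prove the two halves of the lemma separately. The achievability direction ($R>\lambda$ suffices) is immediate: by Example~\ref{exa1} the mixed discrete--continuous source has $\underline{R}_\mathrm{B}(\varepsilon)=\lambda$ for every $0<\varepsilon<1$, so $R>\lambda$ is exactly the hypothesis $R>\underline{R}_\mathrm{B}(\varepsilon)$ of Theorem~\ref{thm1}, which then yields directly that $R$ is $\varepsilon$-achievable for Lebesgue a.a.\ $\matA\in\reals^{m\times n}$ with $m=\lfloor Rn\rfloor$. The converse ($R\geq\lambda$ is necessary) is where the work lies.

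For the converse I would first pass from phaseless to linear measurements: $|\matA\vecx|$ is a Borel function of $\matA\vecx$, so any decoder $g$ for $|\matA\rvecx|$ gives $\tilde g\define g\composition|\cdot|$ with $\Proba[\tilde g(\matA\rvecx)\not\equivalent\rvecx]=\Proba[g(|\matA\rvecx|)\not\equivalent\rvecx]$. Hence it suffices to show that when $R<\lambda$, for \emph{every} sequence of matrices $\matA\in\reals^{m\times n}$ ($m=\lfloor Rn\rfloor$) and \emph{every} Borel $\tilde g\colon\reals^m\to\reals^n$ one has $\Proba[\tilde g(\matA\rvecx)\equivalent\rvecx]\to 0$; this precludes $\varepsilon$-achievability for all $\varepsilon<1$ and forces $R\geq\lambda$. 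To obtain the vanishing-success bound I would expose the continuous part of the source by writing $\rvecxc_i=\rndb_i\rndc_i+(1-\rndb_i)\rndd_i$, where $(\rndb_i)_i$ are independent $\mathrm{Bernoulli}(\lambda)$ variables, $(\rndc_i)_i$ are independent with law $\mu_c$, $(\rndd_i)_i$ are independent with law $\mu_d$, and the three families are mutually independent; this realizes the law of $\rvecx$. Set $\setI\define\{i\colon\rndb_i=1\}$. Conditioned on $(\rndb_i)_i$ and $(\rndd_i)_i$, the vector $\rvecx$ equals $\rndd_i$ in each coordinate $i\notin\setI$ and is absolutely continuous (product density $\prod_{i\in\setI}f_c$, with $f_c$ the Lebesgue density of $\mu_c$) on the affine subspace $W=\{\vecu\in\reals^n\colon\vecuc_i=\rndd_i\text{ for }i\notin\setI\}$, which has dimension $|\setI|$. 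Since $m=\lfloor Rn\rfloor<\lambda n$ for all large $n$ while $|\setI|$ has a $\mathrm{Binomial}(n,\lambda)$ distribution, the event $\{|\setI|>m\}$ has probability tending to $1$.

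On $\{|\setI|>m\}$ the restriction of $\matA$ to $V_\setI\define\spn\{\vece_i\colon i\in\setI\}$ has a kernel of dimension $\geq|\setI|-m\geq 1$, so the level sets of $\vecu\mapsto\matA\vecu$ inside $W$ are affine subsets of dimension $\geq 1$; $\tilde g(\matA\vecu)$ is constant along each of them, hence on a given level set the success set $\{\vecu\colon\vecu\equivalent\tilde g(\matA\vecu)\}$ is contained in a two-point set $\{\vecv,-\vecv\}$. Disintegrating the conditional (absolutely continuous) law of $\rvecx$ over this linear fibration of $W$ --- which, after a linear change of coordinates aligning the kernel with coordinate axes, is just Fubini --- the conditional law on Lebesgue-a.e.\ fiber is absolutely continuous with respect to Lebesgue measure on the (at least one-dimensional) fiber and therefore assigns mass zero to the two points. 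Integrating over the fibers and then over $(\rndb_i,\rndd_i)_i$ gives $\Proba[\tilde g(\matA\rvecx)\equivalent\rvecx\mid|\setI|>m]=0$, so $\Proba[\tilde g(\matA\rvecx)\equivalent\rvecx]\leq\Proba[|\setI|\leq m]\to 0$.

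The step I expect to be the main obstacle is that last measure-theoretic one: making precise that disintegrating an absolutely continuous measure along the fibers of a (possibly non-surjective) linear map yields, for almost every fiber, a measure absolutely continuous with respect to Lebesgue measure on that fiber, and that a fiberwise-constant decoder together with a fiberwise two-point success set is dispatched by a null-set argument uniformly in the fiber. This is routine once one diagonalizes $\restrict{\matA}{V_\setI}$ and invokes Fubini, but it needs a careful statement (including measurability of the fiberwise data). Everything else --- the reduction to linear measurements, the Bernoulli mixture decomposition, and the concentration of $|\setI|$ --- is bookkeeping; alternatively one could cite the converse of Wu and Verd\'u for linear encoders \cite{wuve10}, of which the above is the analogue carrying the harmless additional ambiguity of a global sign.
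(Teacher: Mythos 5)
Your proof of the converse is correct, but it takes a genuinely different route from the paper's. The paper argues by contradiction: from an $\varepsilon$-achievable $R<\lambda$ it extracts a high-probability set $\setK$ on which $|\matA\cdot|$ is one-to-one modulo sign, observes that at most one equivalence class $[\vecu]\in\setK_\sim$ with $\vecu\neq\veczero$ can satisfy $\matA\vecu=\veczero$, and then shows that either $\matA$ itself is one-to-one on $\setK$ or the augmentation $\tilde\matA=\tp{(\tp{\matA},\vecu,\veczero,\ldots,\veczero)}$ (padded with zero rows so its rate stays below $\lambda$) is; either case contradicts the converse of \cite[Thm.~6]{wuve10} for linear measurements of the mixed source. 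You instead re-prove the needed converse from scratch: you realize the law via a $\mathrm{Bernoulli}(\lambda)$ support mixture, condition on the random set $\setI$ of continuous coordinates, note that on $\{|\setI|>m\}$ the fibers of the linear map inside the random affine slice $W$ have dimension at least one and carry an at-most-two-point success set, and finish with Fubini plus the law of large numbers on $|\setI|$. Your version is more self-contained, handles the global-sign ambiguity naturally (a two-point set is just as null as a singleton on a positive-dimensional fiber), and actually delivers the stronger fact that the error probability tends to one; the paper's is shorter because it uses \cite{wuve10} as a black box but needs the augmentation trick precisely to dispose of the single kernel class that the sign ambiguity permits. You are also right that a direct appeal to Wu--Verd\'u after the reduction $\tilde g=g\circ|\cdot|$ would not by itself suffice, since that converse concerns exact rather than up-to-sign recovery; the two-point fiber bound in your argument and the augmentation in the paper's are exactly the two ways of absorbing that extra ambiguity.
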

\begin{proof} Achievability: Follows from Theorem \ref{thm1} and Example \ref{exa1}.
Converse: Suppose that a rate $R<\lambda$ is $\varepsilon$-achievable for some
$\varepsilon$ with $0<\varepsilon<1$. This implies that there exists a set $\setK\subseteq\reals^n$
and a matrix $\matA\in\reals^{m\times n}$ with $m=\lfloor Rn\rfloor$ such that
\begin{enumerate}[(a)]
\item
$\Pr[\rvecx\in\setK]\geq 1-\varepsilon$;
\item
$|\matA \cdot|$ is one-to-one on $\setK_\sim$
\end{enumerate}
for $n$ sufficiently large. From $(b)$ it follows that there can be at most
one equivalence class $[\vecu]\in \setK_\sim$ with $\matA \vecu=\matA
(-\vecu)= 0$ 
(if there was more than one such
equivalence class then the mapping $|\matA \cdot|$ would not be one-to-one on
$\setK_\sim$).

Suppose first that there is no equivalence class $[\vecu]=\{\vecu,-\vecu\}\in \setK_\sim$ with $\matA \vecu=\matA (-\vecu)= 0$ and $\vecu\neq\veczero$. Then, (b) implies that $\matA$ is one-to-one on $\setK$ which, together with (a) and $R<\lambda$, leads to a  contradiction to the converse part of \cite[Thm. 6]{wuve10}.

Now suppose that there is an equivalence class $[\vecu]=\{\vecu,-\vecu\}\in
\setK_\sim$ with $\matA \vecu=\matA (-\vecu)= 0$ and $\vecu\neq\veczero$. Let
$\tilde R$ be such that $R < \tilde R < \lambda$ and set $\tilde m=\lfloor
\tilde R n\rfloor$. Then, $\tilde m > m$ for $n$ sufficiently large. Let
$\tilde\matA=\tp{(\tp{\matA},\vecu,\veczero,\dots,\veczero)}\in\reals^{\tilde
m\times n}$. Then, (b) implies that $\tilde\matA$ is one-to-one on $\setK$
which, together with (a) and $\tilde R<\lambda$, leads to a  contradiction to
the converse part of \cite[Thm. 6]{wuve10}.
\end{proof}



\section{Proof of Proposition \ref{pro1}}\label{proofthm1}
Let
\begin{align*}
\setF(\vecy)
&=\mleft\{\vecu\in\reals^n \big|\vecu\in\setU, |\matA \vecu|=\vecy\mright\}\\
&\phantom{=}\cup\mleft\{\vecu \in\reals^n\big|-\vecu\in\setU, |\matA \vecu|=\vecy\mright\},\quad \vecy\in\reals^m_\geq.
\end{align*}
For a vector $\vecu\in\setF(\vecy)\setminus\{\veczero\}$, let $\bar \vecuc$
denote the first nonzero component of $\vecu$. We then define the reduced set
\begin{align*}
\bar\setF(\vecy)=\mleft\{\vecu\in\setF(\vecy)\setminus\{\veczero\}\big| \bar\vecuc=|\bar\vecuc|\mright\}\cup(\setF(\vecy)\cap\{\veczero\}),\quad \!\!\vecy\in\reals^m_\geq.
\end{align*}
We define the decoder $g: \reals^m_\geq\to \reals^n$ by
\begin{align*}
g(\vecy)=
\begin{cases}
\vecu,&  \text{if}\ \bar\setF(\vecy) = \{\vecu\}\\
\vece, & \text{else}
\end{cases}
\end{align*}
where $\vece$ is some fixed vector in the complement of $\setU$ (used
to declare a decoding error). Then, we have
\begin{align}
&\Proba\big[g(|\matA\rvecx|)\not\equivalent\rvecx\big]\nonumber\\
&=\Proba\big[g(|\matA\rvecx|)\not\equivalent\rvecx,\rvecx\in\setU\big]+\Proba\big[g(|\matA\rvecx|)\not\equivalent\rvecx,\rvecx\notin\setU\big]\nonumber\\
&\leq\Proba\big[g(|\matA\rvecx|)\not\equivalent\rvecx,\rvecx\in\setU\big]+\varepsilon\nonumber\\
&=\Proba\mleft[\exists\vecu\in\setU \big| \vecu\not\equivalent\rvecx, |\matA\vecu|=|\matA\rvecx|, \rvecx\in\setU\mright]+\varepsilon \label{eq:errorbound1}
\end{align}
where  \eqref{eq:errorbound1} follows from the definition of the decoder. Fix an arbitrary $r>0$. Suppose that we can
show that
\begin{align}\label{eq:havetoshow}
P(\vecx) &= \Proba\big[\exists \vecu\in\setU\ \text{with}\ \vecu\not\equivalent\vecx, |\rmatA\vecu|=|\rmatA\vecx|\big]=0,\quad \vecx\in\setU
\end{align}
where $\rmatA\in\reals^{m\times n}$ has independent rows that are uniformly
distributed on $\setB_n(\veczero,r)$. Then,
\begin{align}
&\int\limits_{\setA(r)}\Proba\mleft[\exists\vecu\in\setU \big| \vecu\not\equivalent\rvecx, |\matA\vecu|=|\matA\rvecx|, \rvecx\in\setU\mright]\operatorname{d}\!\mu_\rmatA \nonumber\\
&=\int\limits_{\setU}
\Proba\big[\exists \vecu\in\setU\ \text{with}\ \vecu\not\equivalent\vecx, |\rmatA\vecu|=|\rmatA\vecx|\big]
\operatorname{d}\!\mu_\rvecx\nonumber\\
&=0\label{eq:argumentFubini}
\end{align}
where we used Fubini's Theorem and set $\setA(r)=\setB_n(\veczero,r)\times\dots\times \setB_n(\veczero,r)$.
Since $r$ is
arbitrary, \eqref{eq:argumentFubini} implies that
\begin{align}\label{eq:zeroaa}
\Proba\mleft[\exists\vecu\in\setU \big| \vecu\not\equivalent\rvecx, |\matA\vecu|=|\matA\rvecx|, \rvecx\in\setU\mright]=0
\end{align}
for Lebesgue a.a. matrices $\matA$. Hence, combining
\eqref{eq:errorbound1} and \eqref{eq:zeroaa}  proves the Proposition provided
that we can show that \eqref{eq:havetoshow} holds, which is done in Section
\ref{proofhavetoshow}.



\section{Proof of \eqref{eq:havetoshow}}\label{proofhavetoshow}
Suppose first that $\vecx=\veczero$. Then, $P(\vecx)=0$ if and only if
\begin{align}\label{eq:casexzero}
\Proba\big[\exists \vecu\in\setU\setminus\{\veczero\}\ \text{with}\ \rmatA\vecu=\veczero]=0.
\end{align}
Since $\underline{\dim}_\mathrm{B}\big(\setU\big)<m$, \eqref{eq:casexzero} follows from  \cite[Prop. 1]{stribo13}.
Therefore, we can assume in what follows that $\vecx\neq\veczero$.

We can upper-bound $P(\vecx)\leq P_1(\vecx)+P_2(\vecx)$
with
\begin{align*}
P_i(\vecx)=\Proba\big[\exists \vecu\in\setU_i(\vecx)\ \text{with}\ |\rmatA\vecu|=|\rmatA\vecx|\big],\quad i\in\{1,2\}
\end{align*}
where we defined
\begin{align}
\setU_1(\vecx)&=\{\vecu\in\setU \mid \rank (\vecx,\vecu) =2\}\nonumber\\
\setU_2(\vecx)&=\{\vecu\in\setU \mid \rank (\vecx,\vecu) =1\}\setminus \{\vecu\in\setU| \vecu\equivalent\vecx\}.\nonumber
\end{align}
We have to show that $P_i(\vecx)=0$ for  $i\in\{1,2\}$.
First, we establish  $P_2(\vecx)=0$. We have (recall that $\vecx\neq \veczero$) 
\begin{align}
&P_2(\vecx)\nonumber\\
&=
\Proba\big[\exists \vecu\in\setU\ \text{with}\ \rank (\vecx,\vecu) =1, \vecu\not\sim\vecx,  |\rmatA\vecu|=|\rmatA\vecx|]\nonumber\\
&=\Proba\big[\rmatA\vecx=\veczero]\nonumber\\
&=0\nonumber
\end{align}
where we used \cite[Prop. 1]{stribo13} together with $\underline{\dim}_\mathrm{B}\big(\{\vecx\}\big)=0$ in the last step. It remains to show that  $P_1(\vecx)=0$.
To this end, we first present an auxiliary lemma.
\begin{lem}\label{lemprobtri}
Let $r>0$, $\emptyset \neq \setS\subseteq \setB_n(\veczero,L)$, $\rho
>0$, $\vecx\in \setB_n(\veczero,L)$, and $\rmatA\in\reals^{m\times n}$ with
independent rows that are uniformly distributed on $\setB_n(\veczero,r)$.
Then, there exist $\vecs_l(\rho)\in \setS$,  $l=1,\dots,N_\setS(\rho)$ with $N_\setS(\rho)$ being the
covering number of $\setS$, such that
%
%
\begin{align}\label{eq:probtri}
&\Proba\big[\exists \vecu\in\setS\ \text{with}\ \big\||\rmatA\vecu|-|\rmatA\vecx|\big\|\leq\rho\big]\nonumber\\
&\leq
\sum_{l=1}^{N_\setS(\rho)}\Proba\mleft[\big||\tp{\rveca}\vecs_l(\rho)|^2 - |\tp{\rveca}\vecx|^2\big|\leq 2Lr(2r+1)\rho\mright]^m
\end{align}
where $\rveca$ is uniformly distributed on $\setB_n(\veczero,r)$.
\end{lem}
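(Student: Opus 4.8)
The plan is to reduce the probability on the left of \eqref{eq:probtri} to a union bound over a finite cover of $\setS$, after passing from the $\ell^2$-smallness of $|\rmatA\vecu|-|\rmatA\vecx|$ to a coordinatewise (i.e.\ row-by-row) statement about $\rmatA$.

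First I would build the cover. Since $\setS\subseteq\setB_n(\veczero,L)$ is bounded, $N_\setS(\rho)<\infty$; pick balls $\setB_n(\vecu_l,\rho)$, $l=1,\dots,N_\setS(\rho)$, whose union covers $\setS$. Discard the balls that miss $\setS$, choose a point $\vecs_l(\rho)\in\setS\cap\setB_n(\vecu_l,\rho)$ in each of the remaining ones, and, if fewer than $N_\setS(\rho)$ balls survive, pad the list with arbitrary elements of the nonempty set $\setS$. This yields points $\vecs_l(\rho)\in\setS$ with the property that every $\vecu\in\setS$ lies within Euclidean distance $2\rho$ of some $\vecs_l(\rho)$.

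The technical core is a Lipschitz-type estimate. Let $\rveca_1,\dots,\rveca_m$ denote the rows of $\rmatA$, so each $\rveca_i$ is distributed as $\rveca$ and $\|\rveca_i\|\le r$. Suppose $\vecu\in\setS$ satisfies $\big\| |\rmatA\vecu|-|\rmatA\vecx| \big\|\le\rho$ and $\|\vecu-\vecs_l(\rho)\|\le 2\rho$. Since each coordinate of a vector is dominated by its norm, $\big| |\tp{\rveca_i}\vecu|-|\tp{\rveca_i}\vecx| \big|\le\rho$ for all $i$; by the reverse triangle inequality and Cauchy--Schwarz, $\big| |\tp{\rveca_i}\vecs_l(\rho)|-|\tp{\rveca_i}\vecu| \big|\le\|\rveca_i\|\,\|\vecs_l(\rho)-\vecu\|\le 2r\rho$; hence $\big| |\tp{\rveca_i}\vecs_l(\rho)|-|\tp{\rveca_i}\vecx| \big|\le(2r+1)\rho$. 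Combining this with $|\tp{\rveca_i}\vecs_l(\rho)|+|\tp{\rveca_i}\vecx|\le 2Lr$ (which uses $\|\vecs_l(\rho)\|,\|\vecx\|\le L$ and $\|\rveca_i\|\le r$) through the identity $|a^2-b^2|=|a-b|\,(|a|+|b|)$ gives $\big| |\tp{\rveca_i}\vecs_l(\rho)|^2-|\tp{\rveca_i}\vecx|^2 \big|\le 2Lr(2r+1)\rho$ for every $i$.

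Putting the pieces together, the event $\{\exists\,\vecu\in\setS:\ \||\rmatA\vecu|-|\rmatA\vecx|\|\le\rho\}$ is contained in $\bigcup_{l=1}^{N_\setS(\rho)}\bigcap_{i=1}^{m}\big\{\, \big| |\tp{\rveca_i}\vecs_l(\rho)|^2-|\tp{\rveca_i}\vecx|^2 \big|\le 2Lr(2r+1)\rho \,\big\}$. A union bound over $l$, followed by the observation that $\rveca_1,\dots,\rveca_m$ are \iid copies of $\rveca$ (so the $m$ events inside each intersection are independent and identically distributed), converts the probability of the $l$th intersection into $\Proba\big[\, \big| |\tp{\rveca}\vecs_l(\rho)|^2-|\tp{\rveca}\vecx|^2 \big|\le 2Lr(2r+1)\rho \,\big]^m$, which is precisely \eqref{eq:probtri}. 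I expect the only delicate point to be the bookkeeping of the constants --- in particular that an internal cover can be taken only with radius $2\rho$, which is exactly what turns the factor $(r+1)$ one would naively obtain into $(2r+1)$ --- together with the elementary verification that all inner products involved are bounded in absolute value by $Lr$ on the relevant supports.
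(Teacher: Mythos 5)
Your proof is correct and follows essentially the same strategy as the paper's: build a covering with centers $\vecs_l(\rho)\in\setS$ (via a minimal $\rho$-cover whose balls all intersect $\setS$, giving a $2\rho$-cover with centers in $\setS$), pass from the $\ell^2$ bound to a per-row bound, recenter via the reverse triangle inequality and Cauchy--Schwarz picking up the factor $2r\rho$, multiply by $|a|+|b|\le 2Lr$ to get the squared form, and conclude by a union bound over $l$ and independence of the $m$ rows. The only cosmetic difference is the order of operations — you recenter to $\vecs_l(\rho)$ before invoking row-independence, the paper does so after — but the two orderings are interchangeable and yield the same bound.
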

\begin{proof}
Let $\setS \subseteq \bigcup_{l\in\{1,\dots,N_\setS(\rho)\}}
\setB_n(\vecv_l(\rho),\rho)$, $\vecv_l(\rho)\in
\reals^n$, be a \emph{minimal} covering of $\setS$ according to the
definition of the covering number, cf. Definition \ref{dfndim}. Then, there
exist $\vecs_l(\rho)\in
\setS\cap\setB_n(\vecv_l(\rho),\rho)$ for all
$l=1,\dots,N(\rho)$. Hence, the balls
$\setB_n(\vecs_l(\rho),2\rho)$ cover the set $\setS$ and have
centers in $\setS$. We can upper bound the lhs in  \eqref{eq:probtri} by
\begin{align}
&\Proba\big[\exists \vecu\in\setS\ \text{with}\ \big\||\rmatA\vecu|-|\rmatA\vecx|\big\|\leq\rho\big]\nonumber\\
&\!\leq \!\!\sum_{l=1}^{N_\setS(\rho)}\!\! \Proba\big[\exists \vecu\in\setS\cap\setB_n(\vecs_l(\rho),2\rho)\ \text{with}\ \big\||\rmatA\vecu|-|\rmatA\vecx|\big\|\leq\rho\big]\nonumber\\
&\!\leq \!\!\sum_{l=1}^{N_\setS(\rho)}\!\! \Proba\big[\exists \vecu\in\setS\cap\setB_n(\vecs_l(\rho),2\rho)\ \text{with}\
\big||\tp{\rveca}\vecu| - |\tp{\rveca}\vecx|\big|\leq\rho\big]^m\label{eq:Pr2}
\end{align}
where $\eqref{eq:Pr2}$ follows from the fact that the rows of $\rmatA$ are
independent and uniformly distributed on $\setB_n(\veczero,r)$. Using the
triangle inequality we obtain
\begin{align}\label{eq:tri1}
\big||\tp{\rveca}\vecs_l(\rho)| - |\tp{\rveca}\vecx|\big|
&\leq
\big||\tp{\rveca}\vecx| - |\tp{\rveca}\vecu|\big|+\big||\tp{\rveca}\vecu| - |\tp{\rveca}\vecs_l(\rho)|\big|.
\end{align}
The second term on the rhs
of \eqref{eq:tri1} can be further upper bounded by
\begin{align}
\big||\tp{\rveca}\vecu| - |\tp{\rveca}\vecs_l(\rho)|\big|
&\leq \big|\tp{\rveca}(\vecu -\vecs_l(\rho))\big|\nonumber\\
&\leq \|\rveca\|\|\vecu -\vecs_l(\rho)\|\nonumber\\
&\leq 2r\rho\label{eq:tri2}
\end{align}
where \eqref{eq:tri2} follows from  $\vecu\in
\setB_n(\vecs_l(\rho),2\rho)$ and
 $\rveca\in \setB_n(\veczero,r)$.
Combining \eqref{eq:tri1} and \eqref{eq:tri2} gives
\begin{align}\label{eq:tri3}
\big||\tp{\rveca}\vecx| - |\tp{\rveca}\vecu|\big|\geq
\big||\tp{\rveca}\vecs_l(\rho)| - |\tp{\rveca}\vecx|\big|
-2r\rho.
\end{align}
Using \eqref{eq:tri3} in \eqref{eq:Pr2} yields
\begin{align}
&\Proba\big[\exists \vecu\in\setS\ \text{with}\ \big\||\rmatA\vecu|-|\rmatA\vecx|\big\|\leq\rho\big]\nonumber\\
&\leq \sum_{l=1}^{N_\setS(\rho)}\Proba\mleft[\big||\tp{\rveca}\vecs_l(\rho)| - |\tp{\rveca}\vecx|\big|\leq  (2r+1)\rho\mright]^m\nonumber\\
&\leq
\sum_{l=1}^{N_\setS(\rho)}\Proba\mleft[\big||\tp{\rveca}\vecs_l(\rho)|^2 - |\tp{\rveca}\vecx|^2\big|\leq  2Lr(2r+1)\rho\mright]^m
\label{eq:Pr3}
\end{align}
where $\eqref{eq:Pr3}$ follows from $\big||\tp{\rveca}\vecs_l(\rho)|^2
- |\tp{\rveca}\vecx|^2\big|=\big|(|\tp{\rveca}\vecs_l(\rho)| +
|\tp{\rveca}\vecx|)(|\tp{\rveca}\vecs_l(\rho)| -
|\tp{\rveca}\vecx|)\big|\leq 2Lr\big||\tp{\rveca}\vecs_l(\rho)| -
|\tp{\rveca}\vecx|\big|$.
\end{proof}

We now continue with the proof of $P_1(\vecx)=0$. Since $\setU$ is a bounded set, there exists an $L\in\reals$ such that 
\begin{align}\label{eq:boundedU}
\|\vecu\|\leq L,\quad \vecu\in\setU. 
\end{align}
We define the sets $\setT_{j}(\vecx)$ by


\begin{align*}
\setT_j(\vecx)&=\mleft\{\vecu\in\setU_1(\vecx)\Big| \sqrt{\|\vecu\|^2\|\vecx\|^2-|\tp{\vecu}\vecx|^2}>\frac{1}{j}\mright\},\quad j\in\naturals.
\end{align*}
Since
\begin{align*}
P_1(\vecx)\leq \sum_{j\in\naturals}\Proba\big[\exists \vecu\in\setT_{j}(\vecx)\ \text{with}\ |\rmatA\vecu|=|\rmatA\vecx|\big]
\end{align*}
it is sufficient to prove that
\begin{align}
P_1^{(j)}(\vecx)=\Proba\big[\exists \vecu\in\setT_{j}(\vecx)\ \text{with}\ |\rmatA\vecu|=|\rmatA\vecx|\big]&=0\nonumber
\end{align}
for all $j\in\naturals$.
%
Suppose, by contradiction, that there exists a $j\in\naturals$  such that $P_1^{(j)}(\vecx)>0$.  Then,
\begin{align}
\liminf_{\rho\to 0} \frac{\log P_1^{(j)}(\vecx)}{\log \frac{1}{\rho}}
&=0.\label{eq:contr}
\end{align}
Furthermore, $\setT_{j}(\vecx)
\neq \emptyset$ and by \cite[Sec. 3.2, Property (ii)]{fa90} (recall that
$\setT_{j}(\vecx)\subseteq \setU_1(\vecx)\subseteq\setU$ ) we get
\begin{align}\label{eq:dimT}
\underline{\dim}_\mathrm{B}\big(\setT_{j}(\vecx)\big)<m.
\end{align}
We have
\begin{align}
&
\liminf_{\rho\to 0} \frac{\log P_1^{(j)}(\vecx)}{\log \frac{1}{\rho}} \nonumber\\
&
=\liminf_{\rho\to 0} \frac{\log\Proba\big[\exists \vecu\in\setT_{j}(\vecx)\ \text{with}\ |\rmatA\vecu|=|\rmatA\vecx|\big]}{\log \frac{1}{\rho}}\nonumber\\
&
\leq
\liminf_{\rho\to 0}\nonumber\\
&
\ \ \ \frac{\log
\Big(\!
\sum_{l=1}^{N_{\!\setT_{j}(\vecx)}(\rho)}\!\!\Proba\mleft[\big||\tp{\rveca}\vecs^{(j)}_{l}(\rho , \vecx)|^2 \! - \!|\tp{\rveca}\vecx|^2\big|\leq\tilde\rho\mright]^m\Big)}{\log \frac{1}{\rho}} \label{eq:Pr4b}\\
&\leq
\liminf_{\rho\to 0}
\frac{\log
\Big(
{\tilde\rho}^m\sum_{l=1}^{N_{\!\setT_{j}(\vecx)}(\rho)}{f\big(\tilde\rho,r,\vecs^{(j)}_{l}(\rho , \vecx),\vecx\big)}^m   \Big)}{\log \frac{1}{\rho}} \label{eq:Pr4c}\\
&\leq
\liminf_{\rho\to 0}
\frac{\log \big({\tilde\rho}^m N_{\setT_{j}(\vecx)}(\rho) {\tilde f(\tilde\rho,r,L,j)}^m\big)}{\log \frac{1}{\rho}}\label{eq:Pr4d}\\
&=\underline{\dim}_\mathrm{B}(\setT_{j}(\vecx))-m + m \lim_{\rho\to 0}
\frac{\log \tilde f(\tilde\rho,r,L,j)}{\log \frac{1}{\rho}}\nonumber\\ 
&=\underline{\dim}_\mathrm{B}(\setT_{j}(\vecx))-m\nonumber\\
&<0\label{eq:Pr4e}
\end{align}
where in 
\eqref{eq:Pr4b} we applied Lemma \ref{lemprobtri} with
$\setS=\setT_{j}(\vecx)$ and set $\tilde\rho=2Lr(2r+1)\rho$, 
\eqref{eq:Pr4c} follows from Lemma \ref{lemcomindep} below with
$\vecu=\vecs^{(j)}_{l}(\rho , \vecx)$, $\vecv=\vecx$, and $\delta=\tilde\rho$
where $f$ is defined in \eqref{eq:fR}, in 
\eqref{eq:Pr4d} we used that 
\begin{align}
&f\big(\tilde\rho,r,\vecs^{(j)}_{l}(\rho , \vecx),\vecx\big)\nonumber\\
&\leq 
\tilde f(\tilde\rho,r,L,j)\nonumber\\
&=\frac{2(2r)^{n-2}j}{V(n,r)}\Big(1+\log\Big(2+\frac{8r^2L^2}{\tilde\rho}\Big)\Big),\quad l=1,\dots, N_{\!\setT_{j}(\vecx)}(\rho)\nonumber
\end{align}
which follows from \eqref{eq:boundedU} and the fact that $\vecs^{(j)}_{l}(\rho , \vecx) \in
\setT_{j}(\vecx)$, $l=1,\dots, N_{\setT_{j}(\vecx)}(\rho)$, and 
in 
\eqref{eq:Pr4e} we applied \eqref{eq:dimT}.  
But \eqref{eq:Pr4e}  is a contradiction to  \eqref{eq:contr}.  Therefore, $P_1^{(j)}(\vecx)=0$ for all $j\in\naturals$,  
which implies in turn that $P_1(\vecx)=0$  and concludes the proof of
\eqref{eq:havetoshow}.

\vspace*{0mm}

\section{Concentration of measure result}

\begin{lem}\label{lemcomindep}
Let $r>0$, $\rveca$ be uniformly distributed on $\setB_n(\veczero,r)$,
$\matC=\vecu\tp{\vecu}-\vecv\tp{\vecv}$ with linearly independent vectors
$\vecu,\vecv\in\reals^n$, and $\delta>0$. Then
\begin{align}\label{eq:bound}
\Proba\big[|\tp{\rveca}\matC\rveca|\leq\delta\big] \leq \delta f(\delta,r,\vecu,\vecv)
\end{align}
with
\begin{align}
&f(\delta,r,\vecu,\vecv)=\nonumber\\
&\frac{2(2r)^{n-2}\Big(1+\log\Big(2+\frac{2r^2\big(\|\vecu+\vecv\|\|\vecu-\vecv\|-\big|\|\vecu\|^2-\|\vecv\|^2\big|\big)}{\delta}\Big)\Big)}{\sqrt{\|\vecu\|^2\|\vecv\|^2-|\tp{\vecu}\vecv|^2}V(n,r)}
\label{eq:fR}
\end{align}
\end{lem}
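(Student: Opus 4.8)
The plan is to factor the indefinite quadratic form $\tp{\rveca}\matC\rveca$ into a product of two linear forms and then reduce the bound to a planar area estimate. Put $\vecp=\vecu+\vecv$ and $\vecq=\vecu-\vecv$; these are linearly independent exactly because $\vecu,\vecv$ are, and $\tp{\rveca}\matC\rveca=(\tp{\vecu}\rveca)^2-(\tp{\vecv}\rveca)^2=(\tp{\vecp}\rveca)(\tp{\vecq}\rveca)$. Hence $\{|\tp{\rveca}\matC\rveca|\le\delta\}$ is the event that the planar random vector $(\rnds,\rndt):=(\tp{\vecp}\rveca,\tp{\vecq}\rveca)$ lands in the region $\{(s,t):|st|\le\delta\}$ cut out by a pair of hyperbolas. (It is marginally cleaner to normalize first and work with $\vecp/\|\vecp\|$ and $\vecq/\|\vecq\|$, which makes the ellipse appearing below axis-aligned with the diagonals; I will not dwell on this.)

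Second, I would bound the joint law of $(\rnds,\rndt)$. As a nondegenerate linear image of a vector uniform on $\setB_n(\veczero,r)$, the coarea formula expresses its density at a point as $\bigl(V(n,r)\sqrt{\det G}\bigr)^{-1}$ times the $(n-2)$-volume of the intersection of $\setB_n(\veczero,r)$ with the corresponding codimension-$2$ affine plane, $G$ being the Gram matrix of $\vecp,\vecq$. That intersection is a ball of radius at most $r$ inside an $(n-2)$-plane, so its volume is at most $V(n-2,r)\le(2r)^{n-2}$, while $\det G=\|\vecp\|^2\|\vecq\|^2-(\tp{\vecp}\vecq)^2=4\bigl(\|\vecu\|^2\|\vecv\|^2-|\tp{\vecu}\vecv|^2\bigr)>0$. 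Thus the density of $(\rnds,\rndt)$ is bounded above by $\frac{(2r)^{n-2}}{2\sqrt{\|\vecu\|^2\|\vecv\|^2-|\tp{\vecu}\vecv|^2}\,V(n,r)}$, and $(\rnds,\rndt)$ is supported on the ellipse $E$ that is the image of $\setB_n(\veczero,r)$ under $\rveca\mapsto(\tp{\vecp}\rveca,\tp{\vecq}\rveca)$.

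Third, $\Proba[\,|\tp{\rveca}\matC\rveca|\le\delta\,]$ is then at most that density bound times the area of $\{(s,t):|st|\le\delta\}\cap E$, so everything comes down to this planar area. I would estimate it by slicing $E$ with a one-parameter family of parallel chords (parallel to a principal axis of $E$) and integrating the admissible length over the parameter: for each chord the part satisfying $|st|\le\delta$ has length the minimum of the chord of $E$ (which tapers to $0$ near the two ends of $E$, and which also collapses once the relevant branch of $\{|st|=\delta\}$ leaves $E$) and the width $\asymp\delta/(\text{slicing parameter})$ that $|st|\le\delta$ allows; integrating this minimum over the resulting truncated range is the classical ``area under a hyperbola inside a box'' computation and produces a bound of the shape $\mathrm{const}\cdot\delta\,\bigl(1+\log(\,\cdot\,)\bigr)$. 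Tracking the actual shape of $E$ through this computation shows that the argument of the logarithm is controlled by the transverse extent of $E$, which in terms of the data is $\|\vecu+\vecv\|\|\vecu-\vecv\|-|\|\vecu\|^2-\|\vecv\|^2|$, and the truncation of the integration range is exactly what turns a crude logarithmic argument into $2+\frac{2r^2\bigl(\|\vecu+\vecv\|\|\vecu-\vecv\|-|\|\vecu\|^2-\|\vecv\|^2|\bigr)}{\delta}$, the ``$2+$'' covering the regime $\delta\gtrsim r^2\bigl(\|\vecu+\vecv\|\|\vecu-\vecv\|-|\|\vecu\|^2-\|\vecv\|^2|\bigr)$, which is also where $E$ degenerates to a thin sliver because $\vecu$ and $\vecv$ are nearly collinear. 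Combining the density bound, the area bound, and $V(n-2,r)\le(2r)^{n-2}$ then reproduces $\delta f(\delta,r,\vecu,\vecv)$.

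The crux is this last area estimate with the sharp logarithmic argument. The lazy version -- replace $E$ by its bounding rectangle and use the longest chord of $E$ -- gives $\|\vecu+\vecv\|\|\vecu-\vecv\|$ in place of $\|\vecu+\vecv\|\|\vecu-\vecv\|-|\|\vecu\|^2-\|\vecv\|^2|$ inside the logarithm, which is already insufficient once $\vecu,\vecv$ are appreciably non-orthogonal; recovering the subtracted term forces one to keep the genuine (shrinking) chord lengths of $E$ and, above all, to account honestly for the set of slicing parameters at which $\{|st|\le\delta\}$ actually meets $E$. The remaining ingredients -- the coarea bookkeeping, the step $V(n-2,r)\le(2r)^{n-2}$, and passing back from ``$\rveca$ uniform on the ball'' -- are routine.
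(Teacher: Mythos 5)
Your proposal is correct and reaches the same planar area computation as the paper, just from a slicker entry point: rather than the paper's QR factorization plus $2\times2$ eigendecomposition (Lemma~\ref{lemQR}), which produces the axis-aligned picture $|t_1^2-t_2^2|\le\delta$ inside an ellipse with semi-axes $r\sqrt{\lambda_1}$ and $r\sqrt{\lambda_2}$, you factor $\tp\rveca\matC\rveca=(\tp{(\vecu+\vecv)}\rveca)(\tp{(\vecu-\vecv)}\rveca)$ directly, and your coarea plus $V(n-2,r)\le(2r)^{n-2}$ bound recovers exactly the $(2r)^{n-2}/V(n,r)$ prefactor that the paper obtains by a variable change and fibering over the orthogonal complement. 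The two pictures differ only by the $45^\circ$ rotation the paper itself applies in \eqref{eq:com2c}, so at the crux you are computing the same thing. Two comments on that crucial step. First, the ``transverse extent of $E$'' you invoke is the extent of the ellipse \emph{after} rescaling the $s$- and $t$-axes by $\|\vecu+\vecv\|$ and $\|\vecu-\vecv\|$; the unnormalized image ellipse has minor semi-axis $r\big(\|\vecu\|^2+\|\vecv\|^2-\sqrt{(\|\vecu\|^2-\|\vecv\|^2)^2+4|\tp{\vecu}\vecv|^2}\,\big)^{1/2}$, a genuinely different quantity, so the normalization you relegated to a parenthetical is load-bearing for the identification you assert and cannot be skipped. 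Second, the paper's substitute for your ``honest slicing'' is the one-line observation \eqref{eq:com2b}: inside the bounding rectangle, $|t_2|\le r\sqrt{\lambda_2}$ together with $|t_1^2-t_2^2|\le\delta$ already forces $t_1^2+t_2^2\le\delta+2\lambda_2 r^2$, trapping the intersection in a small disk, after which the rest is the classical hyperbola-in-a-box integral you describe; porting that inequality to your coordinates is probably the cleanest way to turn your sketch into a complete proof.
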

\begin{proof} We have
\begin{align}
&\Proba\big[|\tp{\rveca}\matC\rveca|\leq\delta\big]\nonumber\\
&=\frac{1}{V(n,r)}\int\limits_{\setB_n(\veczero,r)} \ind{\mleft\{\veca\in\reals^n \big| |\tp{\veca}\matC\veca|<\delta\mright\}} \operatorname{d}\!\veca\nonumber\\
&=\frac{1}{V(n,r)}\int\limits_{\setB_n(\veczero,r)} \ind{\mleft\{\veca\in\reals^n \big| |\tp{\veca}\matW\matR
\matJ\tp{\matR}\tp{\matW}\veca|<\delta\mright\}} \operatorname{d}\!\veca\label{eq:com1b}\\
&=\frac{1}{V(n,r)}\int\limits_{\setB_n(\veczero,r)} \ind{\mleft\{\vecb\in\reals^n \big| |\tp{\vecc}\matR
\matJ\tp{\matR}\vecc |<\delta\mright\}} \operatorname{d}\!\vecb\label{eq:com1c}\\
&\leq \frac{(2r)^{n-2}}{V(n,r)}\int\limits_{\setB_2(\veczero,r)} \ind{\mleft\{\vecc\in\reals^2 \big| |\tp{\vecc}\matR
\matJ\tp{\matR}\vecc |<\delta\mright\}} \operatorname{d}\!\vecc \label{eq:com1}
\end{align}
where \eqref{eq:com1b} follows from Lemma \ref{lemQR} with $\matR$ and
$\matJ$ defined in \eqref{eq:JR} and $\matW$ defined in \eqref{eq:W} and
\eqref{eq:com1c} follows from changing variables to $\veca =  \bar\matW\vecb$
with $\bar\matW=(\matW, \matZ)\in\reals^{n\times n}$ where
$\matZ\in\reals^{n\times (n-2)}$ is chosen in such a way that
$\bar\matW\tp{\bar\matW}=\matI$ and $\vecc=\tp{(\veccc_1,\veccc_2)}$ with
$\veccc_1=\vecbc_1$ and $\veccc_2=\vecbc_2$.

The bound \eqref{eq:detRJR} on the determinant of the matrix
$\matR\matJ\tp{\matR}$ implies that one eigenvalue of $\matR\matJ\tp{\matR}$,
say $\lambda_1$, is positive and the other eigenvalue of
$\matR\matJ\tp{\matR}$, say $-\lambda_2$, is negative.
We can assume
without loss of generality that $\lambda_1 \geq \lambda_2$. Using the
eigendecomposition
$\matR\matJ\tp{\matR}=\matU\diag(\lambda_1,-\lambda_2) \tp{\matU}$,
where $\matU\in\reals^{2\times 2}$ with $\matU\tp{\matU}=\matI$, and changing
variables to $\vecc=\matU\vecd$, we can further upper bound  \eqref{eq:com1}
by
\begin{align}
&\frac{(2r)^{n-2}}{V(n,r)}
\int\limits_{\setB_2(\veczero,r)}
\ind{\mleft\{\vecc\in\reals^2 \big| |\tp{\vecc}\matR\matJ\tp{\matR}\vecc |<\delta\mright\}} \operatorname{d}\!\vecc\nonumber\\
&=
\frac{(2r)^{n-2}}{V(n,r)}
\int\limits_{\setB_2(\veczero,r)}
\ind{\mleft\{\vecd\in\reals^2 \big| \mleft|\lambda_1\vecdc_1^2- \lambda_2\vecdc_2^2\mright|<\delta\mright\}} \operatorname{d}\!\vecd \nonumber\\
&=
\frac{(2r)^{n-2}}{\sqrt{\lambda_1\lambda_2}V(n,r)}\int\limits_{\reals^2}
\ind{\mleft\{\vect\in\reals^2 \big|\frac{\vectc_1^2}{\lambda_1}+\frac{\vectc_2^2}{\lambda_2}\leq r^2\mright\}}\nonumber\\
&\phantom{=\frac{(2r)^{n-2}}{\sqrt{\lambda_1\lambda_2}V(n,r)}\int\limits_{\reals^2}}
\times\ind{\mleft\{\vect\in\reals^2 \big| \mleft|\vectc_1^2- \vectc_2^2\mright|<\delta \mright\}}
\operatorname{d}\!\vect\label{eq:com2a}\\
&\leq
\frac{(2r)^{n-2}}{\sqrt{\lambda_1\lambda_2}V(n,r)}\int\limits_{\reals^2}
\ind{\mleft\{\vect\in\reals^2 \big|\vectc_1^2\leq \lambda_1r^2, \vectc_2^2\leq \lambda_2r^2\mright\}}\nonumber\\
&\phantom{=\frac{(2r)^{n-2}}{\sqrt{\lambda_1\lambda_2}V(n,r)}\int\limits_{\reals^2}}
\times\ind{\mleft\{\vect\in\reals^2 \big| \mleft|\vectc_1^2- \vectc_2^2\mright|<\delta \mright\}}
\operatorname{d}\!\vect\label{eq:com3a}
\end{align}
where in \eqref{eq:com2a} we changed variables to $\vect=\diag(\sqrt{\lambda_1},\sqrt{\lambda_2})\vecd$.
The integral in \eqref{eq:com3a} measures the area that is inside the rectangle $\{\vect\mid \vectc_1^2\leq \lambda_1r^2, \vectc_2^2\leq \lambda_2r^2\}$ and the two hyperbolas $\{\vect\mid\vectc_1^2- \vectc_2^2=\pm\delta\}$ (see Figure \ref{picture}). The bound \eqref{eq:bound} can then be established by performing the following to steps:
\begin{enumerate}
\item deriving an upper bound on the integral in \eqref{eq:com3a}.
\item finding an expression of the eigenvalues of $\matR\matJ\tp{\matR}$ in terms of the vectors $\vecu$ and $\vecv$,
\end{enumerate}
which will be done next.
We have
\pgfplotsset{every axis/.append style={
                    axis x line=middle,    
                    axis y line=middle,    
                    axis line style={<->}, 
                    xlabel={$t_1$},          
                    ylabel={$t_2$},          
                    }}

\tikzset{>=stealth}

\begin{figure}
\begin{center}
\begin{tikzpicture}
    \begin{axis}[
            xmin=-5,xmax=5,
        ymin=-5,ymax=5]
        \addplot [red,thick,domain=-2:2] ({cosh(x)}, {sinh(x)});
        \addplot [red,thick,domain=-2:2] ({-cosh(x)}, {sinh(x)});
        \addplot [red,thick,domain=-2:2] ({sinh(x)},{cosh(x)});
        \addplot [red,thick,domain=-2:2] ({sinh(x)},{-cosh(x)});
        \addplot[red,dashed] expression {x};
        \addplot[red,dashed] expression {-x};
        \addplot [blue,thick,domain=-4:4] ({x}, {2});
        \addplot [blue,thick,domain=-4:4] ({x}, {-2});
        \addplot [blue,thick,domain=-2:2] ({-4}, {x});
        \addplot [blue,thick,domain=-2:2] ({4}, {x});
    \end{axis}
\end{tikzpicture}
\caption{Intersection of the rectangle
$\{\vect\mid \vectc_1^2\leq \lambda_1r^2, \vectc_2^2\leq \lambda_2r^2\}$
with the two hyperbolas
$\{\vect\mid\vectc_1^2- \vectc_2^2=\pm\delta\}$  for
$\delta=1$, $\lambda_1=16/r^2$, and $\lambda_2=4/r^2$.}\label{picture}
\end{center}
\end{figure}
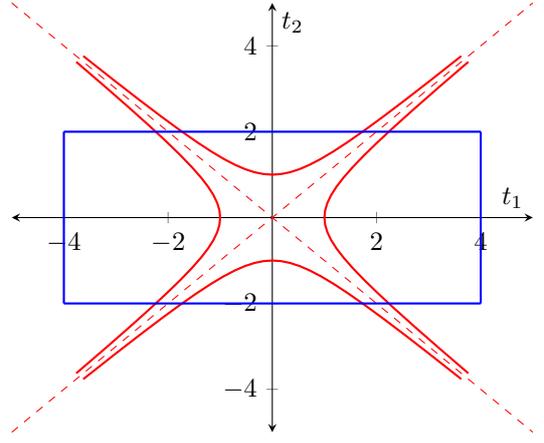

\begin{align}
&\frac{(2r)^{n-2}}{\sqrt{\lambda_1\lambda_2}V(n,r)}\int\limits_{\reals^2}
\ind{\mleft\{\vect\in\reals^2 \big|\vectc_1^2\leq \lambda_1r^2, \vectc_2^2\leq \lambda_2r^2\mright\}}\nonumber\\
&\phantom{=\frac{(2r)^{n-2}}{\sqrt{\lambda_1\lambda_2}V(n,r)}\int\limits_{\reals^2}}
\times\ind{\mleft\{\vect\in\reals^2 \big| \mleft|\vectc_1^2- \vectc_2^2\mright|<\delta \mright\}}
\operatorname{d}\!\vect\nonumber\\
&\leq
\frac{(2r)^{n-2}}{\sqrt{\lambda_1\lambda_2}V(n,r)}\int\limits_{\reals^2}
\ind{\mleft\{\vect\in\reals^2 \big|\vectc_1^2+\vectc_2^2\leq \delta+2\lambda_2r^2\mright\}}\nonumber\\
&\phantom{=\frac{(2r)^{n-2}}{\sqrt{\lambda_1\lambda_2}V(n,r)}\int\limits_{\reals^2}}
\times\ind{\mleft\{\vect\in\reals^2 \big| \mleft|\vectc_1^2- \vectc_2^2\mright|<\delta \mright\}}
\operatorname{d}\!\vect
\label{eq:com2b}\\
&=
\frac{(2r)^{n-2}}{\sqrt{\lambda_1\lambda_2}V(n,r)}\int\limits_{\opR^2}
\ind{\mleft\{\vecz\in\opR^2 \big|\veczc_1^2+\veczc_2^2\leq \delta+2\lambda_2r^2\mright\}} \nonumber\\
&\phantom{=\frac{(2r)^{n-2}}{\sqrt{\lambda_1\lambda_2}V(n,r)}\int\limits_{\reals^2}}
\times
\ind{\mleft\{\vecz\in\opR^2 \big| \mleft|\veczc_1\veczc_2\mright|<\frac{\delta}{2} \mright\}}
\operatorname{d}\!\vecz
\label{eq:com2c}\\
&\leq
\frac{(2r)^{n-2}}{\sqrt{\lambda_1\lambda_2}V(n,r)}\int\limits_{\opR^2}
\ind{\mleft\{\vecz\in\opR^2 \big|\veczc_1^2\leq \delta+2\lambda_2r^2,\, \veczc_2^2\leq \delta+2\lambda_2r^2\mright\}}\nonumber\\
&\phantom{=\frac{(2r)^{n-2}}{\sqrt{\lambda_1\lambda_2}V(n,r)}\int\limits_{\reals^2}}
\times
\ind{\mleft\{\vecz\in\opR^2 \big| \mleft|\veczc_1\veczc_2\mright|<\frac{\delta}{2} \mright\}}
\operatorname{d}\!\vecz
\nonumber\\
&=
\frac{4(2r)^{n-2}}{\sqrt{\lambda_1\lambda_2}V(n,r)}\int\limits_{\opR_\geq^2}
\ind{\mleft\{\vecz\in\opR^2 \big|\veczc_1\leq \sqrt{\delta+2\lambda_2r^2}\mright\}}\nonumber\\
&\phantom{=\frac{4(2r)^{n-2}}{\sqrt{\lambda_1\lambda_2}V(n,r)}\int\limits_{\opR_\geq^2}}
\times
\ind{\mleft\{\vecz\in\opR^2 \big|\veczc_2\leq \min\big(\sqrt{\delta+2\lambda_2r^2},\frac{\delta}{2\veczc_1}\big)\mright\}}
\operatorname{d}\!\vecz
\nonumber\\
&\leq
\frac{4(2r)^{n-2}}{\sqrt{\lambda_1\lambda_2}V(n,r)}\int\limits_{\opR_\geq^2}
\ind{\mleft\{\vecz\in\opR^2 \big|\veczc_1\leq \frac{\delta}{2\sqrt{\delta+2\lambda_2r^2}}\mright\}}\nonumber\\
&\phantom{=\frac{4(2r)^{n-2}}{\sqrt{\lambda_1\lambda_2}V(n,r)}\int\limits_{\opR_\geq^2}}
\times
\ind{\mleft\{\vecz\in\opR^2 \big|\veczc_2\leq  \sqrt{\delta+2\lambda_2r^2}\mright\}}
\operatorname{d}\!\vecz\nonumber\\
&\phantom{=}
+\frac{4(2r)^{n-2}}{\sqrt{\lambda_1\lambda_2}V(n,r)}\int\limits_{\opR_\geq^2}
\ind{\mleft\{\vecz\in\opR^2 \big| \frac{\delta}{2\sqrt{\delta+2\lambda_2r^2}}< \veczc_1\leq\sqrt{\delta+2\lambda_2r^2}\mright\}}\nonumber\\
&\phantom{=\frac{4(2r)^{n-2}}{\sqrt{\lambda_1\lambda_2}V(n,r)}\int\limits_{\opR_\geq^2}}
\times
\ind{\mleft\{\vecz\in\opR^2 \big|\veczc_2\leq \frac{\delta}{2\veczc_1}\mright\}}
\operatorname{d}\!\vecz
\nonumber\\
&=\frac{2\delta(2r)^{n-2}}{\sqrt{\lambda_1\lambda_2}V(n,r)}\Big(1+\log\Big(2+\frac{4\lambda_2r^2}{\delta}\Big) \Big)
\label{eq:com3R}
\end{align}

where in \eqref{eq:com2b} we used  that $\vect_2^2 \leq \lambda_2r^2$ and
$|\vectc_1^2- \vectc_2^2|<\delta$ imply  $\vectc_1^2 + \vectc_2^2 \leq
\delta+2\lambda_2r^2$, and in \eqref{eq:com2c} we applied the orthogonal transformation
$z_1=(1/\sqrt{2})(t_1+t_2)$, $z_2=(1/\sqrt{2})(t_1-t_2)$.
Combining  \eqref{eq:com1} with \eqref{eq:com3R} and using the expressions
\eqref{eq:detRJR} and  \eqref{eq:lminRJR} gives \eqref{eq:fR}.
\end{proof}

\section{Properties of certain rank two matrices}
%
%
%

\begin{lem}\label{lemQR}
Let $\vecu,\vecv\in\reals^n$ be linearly independent and
$\matC=\vecu\tp{\vecu}-\vecv\tp{\vecv}$. Then,
\begin{align}\label{eq:CQR}
\matC=
\matW
\matR
\matJ
\tp{\matR}
\tp{\matW}
\end{align}
with
\begin{align}\label{eq:JR}
\matJ=
\begin{pmatrix}
1&0\\
0&-1
\end{pmatrix},&&
\matR=
\begin{pmatrix}
\|\vecu\|&\frac{\tp{\vecu}\vecv}{\|\vecu\|}\\
0&\|\vecv-\frac{\tp{\vecu}\vecv}{\|\vecu\|^2}\vecu\|
\end{pmatrix}
\end{align}
and
\begin{align}\label{eq:W}
\matW=\begin{pmatrix}
\frac{\veca}{\|\veca\|},\frac{\vecb}{\|\vecb\|}
\end{pmatrix}
\end{align}
where the orthonormal vectors $\veca/\|\veca\|$ and $\vecb/\|\vecb\|$ are
defined by
\begin{align}
\veca&=\vecu\label{eq:veca}\\
\vecb&=\vecv-\frac{\tp{\vecu}\vecv}{\|\vecu\|^2}\veca.\label{eq:vecb}
\end{align}
Moreover,
\begin{align}
\det(\matR\matJ\tp{\matR})
&=|\tp{\vecu}\vecv|^2-\|\vecu\|^2\|\vecv\|^2<0\label{eq:detRJR}\\
\tr(\matR\matJ\tp{\matR}) &= \|\vecu\|^2-\|\vecv\|^2\label{eq:trRJR}\\
\sigma_2(\matR\matJ\tp{\matR})
&=\frac{1}{2}\|\vecu+\vecv\|\|\vecu-\vecv\| - \frac{1}{2}\big|\|\vecu\|^2-\|\vecv\|^2\big|
\label{eq:lminRJR}
\end{align}
where $\sigma_1(\matR\matJ\tp{\matR})\geq\sigma_2(\matR\matJ\tp{\matR})$ are the singular values of $\matR\matJ\tp{\matR}$.
\end{lem}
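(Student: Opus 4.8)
The plan is to observe that $\matW$ and $\matR$ are precisely what one obtains by applying the Gram--Schmidt procedure to $\vecu,\vecv$, so that the factorization \eqref{eq:CQR} drops out immediately, and then to read off \eqref{eq:detRJR}--\eqref{eq:lminRJR} by elementary $2\times2$ linear algebra.

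First I would check that $\veca=\vecu$ and $\vecb=\vecv-\tfrac{\tp{\vecu}\vecv}{\|\vecu\|^2}\veca$ from \eqref{eq:veca}--\eqref{eq:vecb} are orthogonal (this is one Gram--Schmidt step) and that $\vecb\neq\veczero$ (because $\vecu,\vecv$ are linearly independent), so that the two columns of $\matW$ in \eqref{eq:W} are orthonormal, i.e.\ $\tp{\matW}\matW=\matI$. A short computation then shows that $\matW\matR=(\vecu,\vecv)$, the $n\times2$ matrix with columns $\vecu$ and $\vecv$: its first column is $\|\vecu\|\tfrac{\veca}{\|\veca\|}=\vecu$ and its second column is $\tfrac{\tp{\vecu}\vecv}{\|\vecu\|}\tfrac{\veca}{\|\veca\|}+\|\vecb\|\tfrac{\vecb}{\|\vecb\|}=\tfrac{\tp{\vecu}\vecv}{\|\vecu\|^2}\veca+\vecb=\vecv$. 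Consequently $\matW\matR\matJ\tp{\matR}\tp{\matW}=(\vecu,\vecv)\,\matJ\,\tp{(\vecu,\vecv)}=\vecu\tp{\vecu}-\vecv\tp{\vecv}=\matC$, which is \eqref{eq:CQR}.

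For the three numerical identities I would exploit that $\matR$ in \eqref{eq:JR} is upper triangular with diagonal entries $\|\vecu\|$ and $\|\vecb\|=\|\vecv-\tfrac{\tp{\vecu}\vecv}{\|\vecu\|^2}\vecu\|$. Hence $\det(\matR\matJ\tp{\matR})=\det(\matJ)\det(\matR)^2=-\|\vecu\|^2\|\vecb\|^2$, and expanding $\|\vecb\|^2=\|\vecv\|^2-|\tp{\vecu}\vecv|^2/\|\vecu\|^2$ yields $\det(\matR\matJ\tp{\matR})=|\tp{\vecu}\vecv|^2-\|\vecu\|^2\|\vecv\|^2$, which is negative by the Cauchy--Schwarz inequality (strict, since $\vecu,\vecv$ are linearly independent); this is \eqref{eq:detRJR}. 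For the trace, $\tp{\matW}\matW=\matI$ gives $\tr(\matR\matJ\tp{\matR})=\tr(\matW\matR\matJ\tp{\matR}\tp{\matW})=\tr(\matC)=\|\vecu\|^2-\|\vecv\|^2$, which is \eqref{eq:trRJR}. Finally, $\matR\matJ\tp{\matR}$ is a symmetric $2\times2$ matrix with negative determinant, so its eigenvalues are some $\lambda_1>0$ and $-\lambda_2<0$ with $\lambda_1,\lambda_2>0$, and its singular values are $\max(\lambda_1,\lambda_2)\geq\min(\lambda_1,\lambda_2)$; in particular $\sigma_2(\matR\matJ\tp{\matR})=\min(\lambda_1,\lambda_2)=\tfrac12(\lambda_1+\lambda_2)-\tfrac12|\lambda_1-\lambda_2|$. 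Using $\lambda_1-\lambda_2=\tr(\matR\matJ\tp{\matR})=\|\vecu\|^2-\|\vecv\|^2$ and $\lambda_1\lambda_2=-\det(\matR\matJ\tp{\matR})=\|\vecu\|^2\|\vecv\|^2-|\tp{\vecu}\vecv|^2$, I would compute $(\lambda_1+\lambda_2)^2=(\lambda_1-\lambda_2)^2+4\lambda_1\lambda_2=(\|\vecu\|^2+\|\vecv\|^2)^2-4|\tp{\vecu}\vecv|^2=\|\vecu+\vecv\|^2\|\vecu-\vecv\|^2$, hence $\lambda_1+\lambda_2=\|\vecu+\vecv\|\|\vecu-\vecv\|$, and substituting gives \eqref{eq:lminRJR}.

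There is no genuine obstacle here: once one spots that $\matW$ and $\matR$ implement Gram--Schmidt, everything reduces to bookkeeping. The only points that need a little care are the well-definedness $\vecb\neq\veczero$ (so that $\matW$ makes sense) and, in the last step, the sign bookkeeping together with the (easily verified, by expanding both sides) identity $(\|\vecu\|^2+\|\vecv\|^2)^2-4|\tp{\vecu}\vecv|^2=\|\vecu+\vecv\|^2\|\vecu-\vecv\|^2$.
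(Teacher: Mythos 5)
Your proposal is correct and follows essentially the same route as the paper: the factorization $\matW\matR=(\vecu,\vecv)$ is exactly the Gram--Schmidt/QR decomposition the authors use, the determinant and trace identities are computed in the same way, and the expression for $\sigma_2$ via $\frac12(\sigma_1+\sigma_2)-\frac12(\sigma_1-\sigma_2)=\frac12\sqrt{\tr^2-4\det}-\frac12|\tr|$ is identical to yours (your $\lambda_1+\lambda_2=\sqrt{\tr^2-4\det}$ and $|\lambda_1-\lambda_2|=|\tr|$). No gap.
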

\begin{proof}
We can rewrite $\matC=
\matA\matJ\tp{\matA}$ with
$\matA=(\vecu,\vecv)$. 
Hence, to prove \eqref{eq:CQR}, it is sufficient to show that $\matA=\matW\matR$. 

Using the definitions of the vectors $\veca$ and $\vecb$ in \eqref{eq:veca}
and \eqref{eq:vecb}, we can rewrite
\begin{align*}
\matA
&=
\begin{pmatrix}
\veca,\frac{\tp{\vecu}\vecv}{\|\vecu\|^2}\veca+\vecb
\end{pmatrix}\\
&=
\begin{pmatrix}
\veca,\vecb
\end{pmatrix}
\begin{pmatrix}
1& \frac{\tp{\vecu}\vecv}{\|\vecu\|^2}\\
0&1
\end{pmatrix}\\
&=
\begin{pmatrix}
\frac{\veca}{\|\veca\|},\frac{\vecb}{\|\vecb\|}
\end{pmatrix}
\begin{pmatrix}
\|\vecu\|& \frac{\tp{\vecu}\vecv}{\|\vecu\|}\\
0&\|\vecv-\frac{\tp{\vecu}\vecv}{\|\vecu\|^2}\vecu\|
\end{pmatrix}\\
&=\matW\matR
\end{align*}
which proves \eqref{eq:CQR}.

The explicit form of the determinant in \eqref{eq:detRJR} follows from the
fact that
\begin{align}
\det(\matR\matJ\tp{\matR})
&=\det(\matR)\det(\matJ)\det(\tp{\matR})\nonumber\\
&=-|\det(\matR)|^2\nonumber\\
&=-\|\vecu\|^2\mleft\|\vecv-\frac{\tp{\vecu}\vecv}{\|\vecu\|^2}\vecu\mright\|^2\nonumber\\
&=|\tp{\vecu}\vecv|^2-\tp{\vecu}\vecu\tp{\vecv}\vecv\nonumber\\
&<0\label{eq:cs}
\end{align}
where \eqref{eq:cs} follows from the Cauchy-Schwarz inequality \cite[Sec.
0.6.3]{hojo90} and  $\vecu$ and $\vecv$ being  linearly
independent. The expression for the trace \eqref{eq:trRJR} follows from
$\tr(\matR\matJ\tp{\matR})=\tr(\matC)$. Finally,  \eqref{eq:lminRJR} follows
from
\begin{align}
\sigma_2(\matR\matJ\tp{\matR})
&=\frac{1}{2}\big(\sigma_1(\matR\matJ\tp{\matR})+\sigma_2(\matR\matJ\tp{\matR})\big)\nonumber\\
&\phantom{=}-\frac{1}{2}\big(\sigma_1(\matR\matJ\tp{\matR})-\sigma_2(\matR\matJ\tp{\matR})\big)\nonumber\\
&=\frac{1}{2}\sqrt{\tr(\matR\matJ\tp{\matR})^2-4\det(\matR\matJ\tp{\matR})}-\frac{1}{2}|\tr(\matR\matJ\tp{\matR})|\nonumber\\
&=\frac{1}{2}\|\vecu+\vecv\|\|\vecu-\vecv\|- \frac{1}{2}\big|\|\vecu\|^2-\|\vecv\|^2\big|.\nonumber
\end{align}
\end{proof}

\bibliographystyle{IEEEtran}
\bibliography{IEEEabrv,references,isit15a}

\end{document}